\numberwithin{equation}{section} \allowdisplaybreaks
\begin{document}

\title{A discontinuous Galerkin method for a diffuse-interface model of immiscible two-phase flows with soluble surfactant}

\author[1]{Deep Ray}
\author[2]{Chen Liu}
\author[3]{Beatrice Riviere}
\address[1]{Department of Aerospace and Mechanical Engineering\\University of Southern California, Los Angeles, TX, USA \\ deepray@usc.edu}
\address[2]{Department of Computational and Applied Mathematics \\Rice University, Houston, TX, USA \\  cliu.chemaths@gmail.com}
\address[3]{Department of Computational and Applied Mathematics \\Rice University, Houston, TX, USA \\  riviere@rice.edu}

\date{}

\begin{abstract}
A numerical method using discontinuous polynomial approximations is formulated 
for solving a phase-field model of two immiscible fluids with a soluble surfactant.
The scheme recovers the Langmuir adsorption isotherms at equilibrium.  Simulations
of spinodal decomposition, flow through a cylinder and flow through a sequence of pore throats
show the dynamics of the flow with and without surfactant. Finally the numerical method is used
to simulate fluid flows in the pore space of Berea sandstone obtained by micro-CT imaging.
\end{abstract}

\maketitle

%
\section{Introduction}
Surfactants, or surface active agents, play a crucial rule in various industrial and biochemical processes. These include the use of detergent to remove greasy stains \cite{NEUG90}, emulsification agents used to increase the shelf life of food \cite{HASEN19}, surfactant-flooding for efficient recovery of oil from reservoirs \cite{CHEN17} and pulmonary surfactants that prevent lung collapse \cite{HALPERN98}. Surfactant molecules adhere to the interface of two phases (liquid-liquid, liquid-gas or liquid-solid) and lower the interfacial surface tension, thereby increasing the miscibility of the two components. Since surfactants can significantly alter the dynamics of binary mixtures, it becomes necessary develop suitable mathematical models to capture their interaction.

There are primarily two family of methods used to model interfacial dynamics of a multiphase system in the presence of a surfactant. The first corresponds to the sharp-interface methods (see \cite{LOWEN11} and references within), where the interface is considered to be infinitesimally thin. The interface can be tracked explicitly using boundary integral methods \cite{STONE90,MILL93,POZ97}, front-tracking methods \cite{ZHANG2006,MURA2008,LAI2008}, or implicitly via level-sets \cite{XU2003}, volume of fluid \cite{RENARDY2002,LOWEN04} or arbitrary Langrangain-Eulerian methods \cite{HAMEED2008}.  A suitable partial differential equation is formulated to describe the evolution of the surfactant at the interface. In order to simulate soluble surfactants and enable mass transfer across the interface, external source terms and boundary conditions need to be introduced, which need not arise naturally from the model itself.

The second class of methods are the diffusive-interface models based on thermodynamics and density gradient theory \cite{CAHN58}. The interface is considered to have a width which describes the zone of phase-transition and which typically scales as the measure of spatial discretization.  These methods require the specification of a suitable free-energy functional which captures the key dynamics in the bulk phase and the interface. A big appeal of diffusive-interface methods over the sharp-interface methods is that the entire system of equations describing the evolution of the various mixture components and other quantities of interest can be derived from a single energy functional, thus leading to a consistent thermodynamical model formulation. Several free energy formulations have been proposed \cite{LOWEN11,LARADJI92,PATZ95,DIAM96,KOMURA97,DIAM01,SMAN06,LIU10,ENGBLOM13,YANG18,YANG2017,ZHU18}, each having their own advantage. A few of these choices are motivated by the faithful recovery adsorption isotherms \cite{SMAN06,DIAM96,LIU10,ENGBLOM13}. There has also been an active interest in developing energy stable numerical methods which ensure the consistent decay of total energy \cite{YANG18,YANG2017,ZHU18}.  

In a recent series of works \cite{FRANK18,FRANK18b,LIU20}, a diffusive-interface framework was considered for an immiscible two-phase flows at the pore-scale in rock samples. The location of the two-phases in the pore space of the rock is expressed in terms of an order parameter, which may be defined as the difference between mass fractions. Capillary forces and viscous forces drive the displacement of the two phases through the network of connected pores and pore throats. The system is mathematically modeled by the Cahn-Hilliard equations coupled with the incompressible Navier-Stokes equations.  
An interior penalty discontinuous Galerkin (IPDG) scheme was proposed to solve the system, while a temporal semi-implicit convex-concave splitting ensured the scheme to be unconditionally energy stable 
\cite{FRANK18}. The coupled Cahn-Hilliard-Navier-Stokes problem has received much attention recently and several numerical methods have been employed to solve this problem, namely finite element methods and mixed element methods in \cite{feng2006fully,BaoShiSunWang2012,GuoLinLowengrub2014}, finite volume methods \cite{KouSunWang2018} and discontinuous Galerkin methods \cite{GiesselmannPryer2015,LIU20}.

In the present work, we consider a system with three-components: two components form two immiscible phases and the third component
is the surfactant that is miscible in both phases.  The mathematical model is based on the free-energy functional proposed
in \cite{ENGBLOM13}, which leads to equations that are more complex than the Cahn-Hilliard equations for a two-phase system.  
The three-component system is advected by a given velocity field that has been obtained by solving the incompressible Navier-Stokes equations in the pore space.
The primary objectives of this work are:
\begin{itemize}
\item Construct an IPDG scheme for the advective three-component system that is energy dissipative. 
\item Demonstrate the capability of the scheme to recover adsorption isotherms, while emulating key surfactant dynamics.
\item Effectively simulate the flow in porous structures, including a digital rock obtained by 3D imaging of micro-CT slices of the real rock samples.
\end{itemize}

The rest of the paper is organized as follows. Section~\ref{sec:model} describes the mathematical model and formulates the non-dimensional system of partial differential equations describing the flow. In Section~\ref{sec:scheme}, the spatial and temporal discretization is discussed, along with a proof for the decay of total energy at the discrete level. Several numerical results are presented in Section~\ref{sec:results} to demonstrate the performance of the scheme, followed by concluding remarks in the last section.



\section{Mathematical model}
\label{sec:model}
A number of models are available in literature \cite{LARADJI92,KOMURA97,LOWEN04,SMAN06,LIU10,LOWEN11,ENGBLOM13,ZHU18} to describe the propagation of an incompressible binary mixture in the presence of a surfactant. Each model is endowed with its own set of advantages in capturing realistic flow behaviour and ensuring stable numerical computations.  In this work, we choose the diffuse-interface model proposed in \cite{ENGBLOM13} to balance the model complexity while ensuring a faithful representation of the underlying physics.  

\subsection{Governing equations}
Let $\Omega \subset \Ro^3$ be an open bounded polyhedral domain
and let $(0,T)$ denote the time interval with $T\in \Ro^+$. We use the notation $\Omega_T := \Omega \times (0,T)$ to donate the combined space-time domain. We denote by $c: \Omega_T  \mapsto [-1,1]$ the order parameter, which is  the difference between  mass (or volume) fractions of the two components of the mixture. Let us denote the surfactant volume fraction by 
$s : \Omega_T \mapsto [0,1]$.  The Helmholtz free energy of the system (see \cite{ENGBLOM13} and references therein) can be expressed as

\begin{subequations}\label{eqn:helmenergy}
\begin{align}
\mathcal{F}(c,s) &= \int_\Omega \left( F_c + F_s + F_{s,c}\right),\tag{\ref{eqn:helmenergy}}\\
F_c &= \beta_1 \Phi(c) + \frac{\kappa}{2} |\nabla c|^2, \\
F_s &= \beta_2 \Psi(s), \\
F_{s,c} &= - \beta_3 s \Phi(c) + \beta_4 s c^2,  
\end{align}
\end{subequations}
where $\kappa, \beta_1,\beta_2,\beta_3,\beta_4$ are non-negative constants.
In the above equations, $F_c$ is the energy functional for the two immiscible bulk phases, $F_s$ is the energy associated with the local surfactant concentration, and $F_{s,c}$ is the contribution to the energy from the interaction between the surfactant and the two phases. The term $(-s\Phi(c))$ is the energy potential accounting for the adsorption of the surfactant at the interfacial boundary, and 
the term $s c^2$ penalizes the amount of free surfactant in the bulk phases. For the remainder of this paper, we choose $\Phi(c)$ to be the Ginzburg-Landau double well potential
\begin{equation}\label{eqn:Psi}
\Phi(c) = \frac{1}{4}(1-c^2)^2,
\end{equation}
and $\Psi$ to be the entropic part of the Flory-Huggins potential
\begin{equation}\label{eqn:Psi}
\Psi(s) = s \log(s) + (1-s) \log(1-s) + \log(2),
\end{equation}
where the last constant term is added to ensure $\Psi$ is non-negative. Since $\Psi$ is ill-defined as $s$ approaches 0 or 1, we implement the following regularized version of the potential
\begin{equation}\label{eqn:Psi_reg}
\Psi(s) = \begin{cases}
s \log(s) + (1-s) \log(1-s)  \\
\ \ \ + \log(2)  \qquad \ \ \ \ \ \text{ if } s \in [\epsilon, 1 - \epsilon ],\\
s\log(s) + \frac{1}{2\epsilon}(1-s)^2 + (1-s)\log(\epsilon) \\
\ \ \ - \frac{\epsilon}{2} + \log(2) \qquad \text{ if } s > 1 - \epsilon,\\
(1-s)\log(1-s) + \frac{1}{2\epsilon}s^2 + s\log(\epsilon) \\
\ \ \ - \frac{\epsilon}{2} + \log(2) \qquad \text{ if } s < \epsilon,\\
\end{cases}
\end{equation}
with the threshold $\epsilon = 10 ^{-6}$.

The potential $\Phi(s)$ can be decomposed into the sum of a convex part $\Phi_+$ and a concave part $\Phi_-$. Although this splitting is not unique, we make the following choice in this paper
\begin{equation}\label{eqn:splitting}
\Phi_+(c) = \frac{1}{4}(1 + c^4), \quad \Phi_-(c) = - \frac{1}{2} c^2.
\end{equation}
Furthermore, $\Psi(s)$ is a convex function whenever $s \in [0,1]$.

\begin{remark}
Three free-energy models were considered in \cite{ENGBLOM13}. The choice \eqref{eqn:helmenergy} corresponds to "Model 3" with suitably chosen values for $\beta_i$. 
\end{remark}

Taking the functional/variational derivative of the Helmholtz energy with respect to $c$ and $s$ leads to the following expressions of the chemical potentials
\begin{equation}\label{eqn:chem_pot}
\begin{aligned}
\cpc&:= \frac{\delta \mathcal{F}}{\delta c} = \beta_1 \Phi^\prime(c) - \kappa \Delta c - \beta_3 s \Phi^\prime(c) + 2 \beta_4 c s, \\ \cps&:=\frac{\delta \mathcal{F}}{\delta s} = \beta_2 \Psi^\prime(s) - \beta_3 \Phi(c) + \beta_4 c^2.
\end{aligned}
\end{equation}
Let $\vel$ be a solenoidal velocity field. The order parameter and surfactant satisfy the mass balance equations:
\begin{align*}
\partial_t c - \nabla \cdot (M_c \nabla \cpc) + \nabla \cdot (c \vel) &=0 \quad & \text{in } \Omega_T ,\\
\partial_t s - \nabla \cdot (M_s \nabla \cps) + \nabla \cdot (s \vel)&=0 \quad & \text{in } \Omega_T,
\end{align*}
where $M_c$ and $M_s$ are non-negative mobilities. In order to remove the dependence of the surfactant Cahn-Hilliard model on physical units, we appropriately non-dimensionalize the equations.  We begin by listing the main quantities and their units in Table \ref{tab:units}. Let us denote the characteristic length as $\bar{x}$, the characteristic velocity as $\bar{v}$, the characteristic time as $\bar{t} = \bar{x}/\bar{v}$, the characteristic chemical potential as $\bar{\mu} = \beta_1$, the characteristic mobility (for $c$) as $\bar{M_c}$ and the characteristic mobility (for $s$) as $\bar{M_s}$. 
\begin{table}[!h]\label{tab:units}
\begin{center}
\begin{tabular}{ccc}
\hline
 \textbf{Quantity} & \textbf{Symbol} & \textbf{Unit} \\ \hline
 time &  $t$ & s \\
 length & $x$ & m \\
 order parameter&  $c$ & - \\
 surfactant &  $s$ & - \\  
 chemical potential  & $\cpc$,  $\cps$ & kg m$^{-1}$ s$^{-2}$\\
 mobility& $M_c$, $M_s$ & m$^3$ s kg$^{-1}$\\
coefficient (type 1)&  $\kappa$ &  kg m s$^{-2}$ \\
coefficient (type 2) & $\beta_1, \beta_2, \beta_3, \beta_4$ &  kg m$^{-1}$ s$^{-2}$\\ \hline
\end{tabular}
\caption{Quantities of model \eqref{eqn:model} and their units.}
\end{center}
\end{table}
We define the Peclet $(\Pe_c,\Pe_s)$  and Cahn $(\Cn)$ numbers:
\[
\Pe_c = \frac{\bar{x}^2}{\beta_1 \bar{t} \bar{M}_c}, \quad
\Pe_s = \frac{\bar{x}^2}{\beta_1 \bar{t} \bar{M}_s}, \quad 
\Cn = \left(\frac{\kappa}{\bar{x}^2 \beta_1}\right)^{1/2}.
\]
The non-dimensional equations are (for simplicity, we keep the same notation for the dimensionless quantities):
\begin{subequations}\label{eqn:model}
\begin{align}
\partial_t c - \frac{1}{\Pe_c}\nabla \cdot (M_c \nabla \cpc) + \nabla \cdot (c \vel) &=0 \quad & \text{in } \Omega_T,\\
\partial_t s - \frac{1}{\Pe_s}\nabla \cdot (M_s \nabla \cps) + \nabla \cdot (s \vel)&=0 \quad & \text{in }\Omega_T,\\
\cpc - \Phi^\prime(c) + \Cn^2 \Delta c && \notag \\ 
+ \alpha_3 s \Phi^\prime(c) - 2 \alpha_4 c s &=0 \quad & \text{in } \Omega_T,\\
\cps - \alpha_2 \Psi^\prime(s) + \alpha_3 \Phi(c) - \alpha_4 c^2 &=0 \quad & \text{in } \Omega_T,
\end{align}
\end{subequations}
where $M_c$ is a dimensionless constant, $M_s$ is taken to be the function $M_s = \max\bigl(0,s(1-s)\bigr)$ and the remaining non-dimensional coefficients are:
\[
\alpha_i = \frac{\beta_i}{\beta_1}, \quad 2\leq i \leq 4.
\]
The initial conditions for the system \eqref{eqn:model} are given by $c^0:\bar{\Omega}\mapsto [-1,1]$ and $s^0:\bar{\Omega}\mapsto [0,1]$. In order to prescribe boundary conditions, let us partition the domain boundary $\dbnd$. We use the notation $\Gamma^\mathrm{wall}$ to denote the part of the domain boundary that corresponds to the fluid-solid interface, where a no-slip boundary condition is assumed for the velocity field, i.e., $\vel = 0$. If $\dbnd = \Gamma^\mathrm{wall}$, then the system is said to be closed. In addition to this, $\dbnd$ may be further partitioned into the inflow and outflow boundaries
\begin{equation*}
\begin{aligned}
\Gamma^\mathrm{in} &= \{ \x \in \dbnd \ : \ \vel \cdot \n < 0 \}, \\
\Gamma^\mathrm{out} &= \partial\Omega \setminus (\Gamma^\mathrm{wall} \cup \Gamma^\mathrm{in}),
\end{aligned}
\end{equation*}
where $\n$ denotes the unit normal vector outward of the domain. 
We consider the following boundary conditions
\begin{subequations}\label{eqn:surch_bc}
\begin{align}
c &= c_\mathrm{in}, \quad \text{on } \Gamma^\mathrm{in} \times (0,T),  \label{eqn:surch_bca} \\
s &= s_\mathrm{in}, \quad \text{on } \Gamma^\mathrm{in} \times (0,T),  \label{eqn:surch_bcb} \\
\nabla c \cdot \n&=0  \quad\text{on } (\Gamma^\mathrm{wall} \cup \Gamma^\mathrm{out}) \times (0,T),  \label{eqn:surch_bcc} \\
M_c \nabla \cpc \cdot \n&=0  \quad\text{on } \dbnd \times (0,T), \label{eqn:surch_bcd} \\
M_s \nabla \cps \cdot \n&=0  \quad\text{on } \dbnd \times (0,T), \label{eqn:surch_bce}
\end{align}
\end{subequations} 
where $c_\mathrm{in}:\Gamma^\mathrm{in} \times (0,T) \mapsto [-1,1]$ 
and $s_\mathrm{in}:\Gamma^\mathrm{in} \times (0,T) \mapsto [0,1]$.

\subsection{Energy decay and mass conservation}
Assuming that $\vel = 0$ in $\overline{\Omega_T}$, i.e., the system is non-advective, the total Helmholtz energy \eqref{eqn:helmenergy} is non-increasing in time. Indeed, using the system \eqref{eqn:model} with the boundary conditions \eqref{eqn:surch_bc}, we obtain
\begin{align*}
\dd{\mathcal{F}}{t} =& \int_\Omega \frac{\delta \mathcal{F}}{\delta c} \partial_t c + \int_\Omega \frac{\delta \mathcal{F}}{\delta s} \partial_t s 
= - \frac{1}{\Pe_c}\int_\Omega M_c | \nabla \cpc|^2 - \frac{1}{\Pe_s} \int_\Omega M_s | \nabla \cps|^2  
\leq \ 0.
\end{align*}
For non-advective closed systems, we can easily show that
\[
\int_\Omega c = \int_\Omega c_0, \quad \int_\Omega s = \int_\Omega s_0
\]
This implies that the mass of the surfactant is conserved.  This also implies that the mass of the two components that form
the two immiscible phases is conserved.


%

\section{Discretization}
\label{sec:scheme}
In this section, we give details of the discrete spaces and operators needed to formulate the discontinuous Galerkin (DG) scheme for \eqref{eqn:model}. We first describe the temporal discretization by assuming continuity in space.

\subsection{Temporal discretization}\label{sec:temporal_disc}
Let $0= t_0 < t_1 < ... < t_{N_T}$ be a decomposition of $(0,T)$ into $N_T$ subintervals, with $\tau_n = t_n - t_{n-1}$ denoting the $n$th step size. The velocity field $\vel$ is given at each time step $t_n$ and it is denoted by $\vel^n$. Then the semi-discrete (in time) scheme reads as follows:

For each $1 \leq n \leq N_T$, given $(c^{n-1}, s^{n-1})$ find $c^n$, $s^n$, $\mu_c^n$, $\mu_s^n$ such that
\begin{subequations}\label{eqn:timealg}
\begin{align}
\frac{c^n-c^{n-1}}{\tau_n} - \frac{1}{\Pe_c} \nabla \cdot (M_c \nabla \mu_c^n) + \nabla \cdot (c^n \vel^n) & = 0  \qquad \text{in } \Omega, \label{eqn:timealga}\\
\frac{s^n-s^{n-1}}{\tau_n} - \frac{1}{\Pe_s} \nabla \cdot (M_s(s^{n-1}) \nabla \mu_s^n) + \nabla \cdot (s^n \vel^n) & = 0  \qquad \text{in } \Omega, \label{eqn:timealgb}\\
-\mu_c^n +   \Phi_+^\prime(c^n) + \Phi_-^\prime(c^{n-1}) - \Cn^2 \Delta c^n -\alpha_{3} s^n (\Phi_+^\prime(c^{n-1}) +\Phi_-^\prime(c^{n})) & \notag\\
+ \alpha_{4} s^n (c^{n-1} + c^{n}) &= 0 \qquad \text{in } \Omega ,\label{eqn:timealgc}\\
-\mu_s^n  + \alpha_2 \Psi^\prime(s^n) - \alpha_{3} \Phi(c^{n-1})  +\alpha_{4} (c^{n-1})^2 &= 0  \qquad \text{in } \Omega,\label{eqn:timealgd} 
\end{align}
with the initial and boundary conditions
\begin{align}
c^n & = c_\mathrm{in}   \qquad \text{on}\ \Gamma^{\mathrm{in}}, \\
s^n & = s_\mathrm{in}   \qquad \text{on}\ \Gamma^{\mathrm{in}}, \\
\nabla c^n \cdot \n&=0  \ \ \qquad \text{on } \Gamma^\mathrm{wall} \cup \Gamma^\mathrm{out},  \label{eqn:step1e} \\
M_c \nabla \mu_c^n \cdot \n&=0  \ \ \qquad \text{on } \partial\Omega, \label{eqn:step1f} \\
M_s(s^{n-1}) \nabla \mu_s^n \cdot \n&=0  \ \ \qquad \text{on } \partial\Omega. \label{eqn:step1g}
\end{align}
\end{subequations}

The semi-implicit time discretization considered above is useful in constructing an energy decaying scheme (see Section \ref{sec:disc_energy_decay}).

\subsection{Spatial discretization}
The spatial domain $\Omega$ is discretized using a family of conforming non-degenerate regular meshes $\Tau_h = \{E_k\}$, where $h$ denotes the maximum element diameter. We denote by $\Gamma_h$ the set of interior faces.
For each $e \in \Gamma_h$ shared by elements $E_{k^-}$ and $E_{k^+}$, we define the unit normal vector $\n_e$ oriented from $E_{k^-}$ to $E_{k^+}$ if $k^- < k^+$. Note that for $e \in \dbnd$, $\n_e$ denotes the outward unit normal to $\dbnd$. The average and jump of any scalar quantity $w$ across the face $e$ is denoted by
\begin{equation*}
\begin{aligned}
\eavg{w} &= \begin{cases} \frac{1}{2} w \big|_{E_{k^-}}  + \frac{1}{2} w \big|_{E_{k^+}}  \quad & \text{if} \ e \in \Gamma_h\\
                                              w \big|_{E_{k^-}} \quad & \text{if} \ e \in E_{k^-} \cap \dbnd
                     \end{cases},                         \\
\edif{w} &= \begin{cases}  w \big|_{E_{k^-}}  - w \big|_{E_{k^+}} \quad & \text{if} \ e \in \Gamma_h\\
                                              w \big|_{E_{k^-}} \quad & \text{if} \ e \in E_{k^-} \cap \dbnd
                     \end{cases} .                     
\end{aligned}
\end{equation*}

For any positive integer $r$, consider the broken Sobolev space
\[
H^r(\Tau_h) = \{ v \in L^2(\Omega):\ \forall E \in \Tau_h, v \big|_{E} \in H^r(E) \}. 
\]
We define the following discrete forms for the various differential operators in \eqref{eqn:model}
\begin{align*}
a_\mathcal{A} &: H^2(\Tau_h)^d \times H^2(\Tau_h)\times H^2(\Tau_h) \mapsto \Ro, \\
b_\mathcal{A} &:  H^2(\Tau_h)^d \times L^\infty(\Tau_h) \times H^2(\Tau_h)  \mapsto \Ro, \\
a_\mathcal{D} &: H^2(\Tau_h) \times H^2(\Tau_h) \mapsto \Ro, \\
a_{\mathcal{M}_s} &: L^\infty(\Tau_h) \times H^2(\Tau_h) \times H^2(\Tau_h) \mapsto \Ro, \\
a_{\mathcal{D},\Gamma^\mathrm{in}}  &:   H^2(\Tau_h) \times  H^2(\Tau_h)  \mapsto \Ro, \\
b_\mathcal{D} &: H^2(\Tau_h) \mapsto \Ro. 
\end{align*}
The forms used for the advection terms are expressed as
\begin{align*}
a_{\mathcal{A}}(\vel;w,\vartheta)  =&   - \sum_{E \in \Tau_h} \int_E w \vel \cdot \nabla \vartheta  + \sum_{e \in \Gamma_h \cup \Gamma^\mathrm{out}} \int_e w^\uparrow  \eavg{\vel \cdot \n_e} \edif{\vartheta},\\
b_{\mathcal{A}}(\vel,w;\vartheta) =&  - \sum_{e \in \Gamma^\mathrm{in}} \int_e w \vel \cdot \n_e \vartheta,
\end{align*}
where the upwind term $w^\uparrow$ for the scalar quantity $w$ on the face $e$ is given by
\begin{align*}
w^\uparrow \big|_{e \in \Gamma_h} = \begin{cases} w|_{E_{k^-}} \quad & \text{if } \eavg{\vel} \cdot \n_e \geq 0,\\
								          w|_{E_{k^+}} \quad & \text{if } \eavg{\vel} \cdot \n_e < 0,	
			 			     \end{cases} \qquad
						     w^\uparrow \big|_{e \in \dbnd} = \begin{cases}
						      w|_{E_{k^-}} \quad & \text{if } \vel \cdot \n_e \geq 0,\\
								  0 \quad & \text{if } \vel \cdot \n_e < 0.
			 			     \end{cases}				          
\end{align*}

The forms corresponding to the diffusion terms are given by
\begin{align*}
a_\mathcal{D}(w,\vartheta)=&  \sum_{E \in \Tau_h} \int_E  \nabla w \cdot \nabla \vartheta  - \sum_{e \in \Gamma_h} \int_e \left( \eavg{ \nabla w \cdot \n_e} \edif{\vartheta}\right) 
 - \sum_{e \in \Gamma_h} \int_e \left( \eavg{ \nabla \vartheta \cdot \n_e} \edif{w}\right) 
+ \frac{\sigma_\mathcal{D}}{h} \sum_{e \in \Gamma_h} \int_e \edif{w} \edif{\vartheta}, \\
a_{\mathcal{M}_s}(z;w,\vartheta)=&  \sum_{E \in \Tau_h} \int_E M_s(z) \nabla w \cdot \nabla \vartheta  
- \sum_{e \in \Gamma_h} \int_e \left( \eavg{M_s(z) \nabla w \cdot \n_e} \edif{\vartheta}\right) 
 + \sum_{e \in \Gamma_h} \int_e \left( \eavg{M_s(z) \nabla \vartheta \cdot \n_e} \edif{w}\right) \\
&+ \frac{\sigma_\mathcal{M}}{h} \sum_{e \in \Gamma_h} \int_e \edif{w} \edif{\vartheta},\\
a_{\mathcal{D},\Gamma^\mathrm{in}}(w,\vartheta)=&   - \sum_{e \in \Gamma_\mathrm{in}} \int_e \left( \nabla w \cdot \n_e \right) \vartheta 
- \sum_{e \in \Gamma_\mathrm{in}} \int_e \left( \nabla \vartheta \cdot \n_e \right) w 
+ \frac{\sigma_{\mathcal{D}_1}}{h} \sum_{e \in \Gamma_\mathrm{in}} \int_e w \vartheta,\\
\ddif{\vartheta}=& - \sum_{e \in \Gamma_\mathrm{in}} \int_e \left( \nabla \vartheta \cdot \n_e \right) c_\mathrm{in} 
+ \frac{\sigma_{\mathcal{D}_1}}{h} \sum_{e \in \Gamma_\mathrm{in}} \int_e c_\mathrm{in} \vartheta,
\end{align*}
We point out that the form $a_{\mathcal{D}}$ is a symmetric bilinear form whereas the form $a_{\mathcal{M}_s}$ is non-symmetric.
This choice has been carefully made to produce a scheme that would be energy dissipative according to Proposition~\ref{thm:energy_decay}.

\subsection{Fully-discrete scheme}
The spatial discretization of \eqref{eqn:model} is performed using IPDG. We follow closely the formulation considered for the advective pure Cahn-Hilliard system considered in \cite{FRANK18}. Define $\pspace_q(E)$ to be set of all polynomials on $E$ of degree at most $q$ and
define the broken polynomial space
\[
\pspace_q(\Tau_h) = \prod_{E_k \in \Tau_h} \pspace_q(E_k).
\]

Let $\vel_h^n$ denote the $L^2$ projection of $\vel^n$ into $\pspace_q(\Tau_h)^3$. Using $(\cdot,\cdot)$ to denote the $L^2$ inner-product on $\Omega$, we consider the following fully-discrete scheme for the temporal algorithm described in Section \ref{sec:temporal_disc}: 

Given $(c_h^{n-1}, s_h^{n-1})$ find $c^n_h,s^n_h,\cpch^n,\cpsh^n \in \pspace_q(\Tau_h)$ such that for all $\vartheta_h\in \pspace_q(\Tau_h)$
\begin{subequations}
\label{eqn:fd_scheme}
\begin{align}
(c_h^n,\vartheta_h) + \frac{\tau_n \, M_c}{\Pe_c} a_{\mathcal{D}}(\mu_{ch}^n,\vartheta_h) 
+ \tau_n \, a_{\mathcal{A}}(\vel^{n}_h;c_h^n,\vartheta_h) &= (c^{n-1}_h,\vartheta_h)
+ \tau_n \, b_{\mathcal{A}}(\vel_h^n,c_\mathrm{in};\vartheta_h), \label{eqn:fd_schemea}\\
(s_h^n,\vartheta_h) + \frac{\tau_n}{\Pe_s} a_{\mathcal{M}_s}(s_h^{n-1};\mu_{sh}^n, \vartheta_h) 
+\tau_n \, a_{\mathcal{A}}(\vel^{n}_h;s_h^n,\vartheta_h)  
&=(s_h^{n-1},\vartheta_h)
+ \tau_n b_{\mathcal{A}}(\vel_h^n,s_\mathrm{in};\vartheta_h), \label{eqn:fd_schemeb}\\
-(\mu_{ch}^n,\vartheta_h) +   (\Phi_+^\prime(c^n_h),\vartheta_h)  + \Cn^2 \left(a_\mathcal{D}(c^n_h,\vartheta_h) 
 + a_{\mathcal{D},\Gamma^\mathrm{in}}(c^n_h,\vartheta_h)\right) &\\
 -\alpha_{3} (\Phi_+^\prime(c^{n-1}_h)+\Phi_-^\prime(c^{n}_h),s_h^n\,\vartheta_h)  
+ \alpha_{4} (c^{n-1}_h +c_h^n, s^n_h\,\vartheta_h)
&= -  (\Phi_-^\prime(c^{n-1}_h),\vartheta_h) 
 \quad + \Cn^2\ddif{\vartheta_h},\label{eqn:fd_schemec}\\
-(\mu_{sh}^n,\vartheta_h)  + \alpha_2 (\Psi^\prime(s^n_h),\vartheta_h) &=  \alpha_{3} (\Phi(c^{n-1}_h),\vartheta_h)
-\alpha_{4} ((c^{n-1}_h)^2,\vartheta_h). \label{eqn:fd_schemed} 
\end{align}
\end{subequations}
We finish this section by recalling a trace inequality and a property satisfied by the form $a_\mathcal{D}$ that is used in the next
proposition on the total discrete energy.
\begin{lemma}\label{lem:trace_ineq}
Let $E$ be a triangle or rectangle in 2D, a tetrahedron or a parallelepiped in 3D. Let $v \in \pspace_q(E)$, $z \in \pspace_{\hat {q}}(E)$. Then there exists a constant $C_t$ depending only on $q$ and $\hat {q}$ such that for all $e \in \partial E$
\begin{equation}\label{eqn:trace_ineq}
\|z \nabla v \cdot \n_e \|_{L^2(e)} \leq C_t |e|^{1/2} |E|^{-1/2} \| z \nabla v \|_{L^2(E)},
\end{equation}
where $|E|$ (resp. $|e|$) denotes the measure of $E$ (resp. $e$).
\end{lemma}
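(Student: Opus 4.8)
The plan is to reduce the vector-valued quantity to a scalar polynomial, apply a trace inequality on a fixed reference element, and then transport the estimate back to $E$ by an affine change of variables, keeping track of the geometric scaling exactly so that the resulting constant depends only on the polynomial degrees.

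First I would observe that on each of the four admissible shapes (triangle, rectangle, tetrahedron, parallelepiped), every face $e \in \partial E$ is flat, so the outward unit normal $\n_e$ is a constant vector on $e$. Extending $\n_e$ to the constant vector field on all of $E$, the product $p := z\, \nabla v \cdot \n_e$ is a scalar polynomial on $E$ of degree at most $q - 1 + \hat{q}$, since $\nabla v \in \pspace_{q-1}(E)^d$ and $z \in \pspace_{\hat{q}}(E)$. A pointwise Cauchy--Schwarz inequality (using $|\n_e| = 1$) gives $|p| \le |z\, \nabla v|$ on $E$, hence $\|p\|_{L^2(E)} \le \|z\, \nabla v\|_{L^2(E)}$. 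It therefore suffices to prove the scalar estimate $\|p\|_{L^2(e)} \le C_t |e|^{1/2} |E|^{-1/2}\|p\|_{L^2(E)}$ for an arbitrary polynomial $p$ of degree at most $q-1+\hat{q}$.

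To prove this scalar trace inequality, I would fix a reference element $\hat{E}$ for each admissible shape and let $F_E(\hat{x}) = B\hat{x} + b$ be the affine map with $F_E(\hat{E}) = E$ and $F_E(\hat{e}) = e$. On the finite-dimensional space $\pspace_{q-1+\hat{q}}(\hat{E})$ all norms are equivalent, so there is a constant $\hat{C}$, depending only on $\hat{E}$ (through a maximum over its faces) and on $q + \hat{q}$, with $\|\hat{p}\|_{L^2(\hat{e})}^2 \le \hat{C}\, \|\hat{p}\|_{L^2(\hat{E})}^2$ for every $\hat{p} \in \pspace_{q-1+\hat{q}}(\hat{E})$. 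Writing $\hat{p} = p \circ F_E$ (of the same degree, since $F_E$ is affine) and using the change of variables $\|p\|_{L^2(e)}^2 = (|e|/|\hat{e}|)\|\hat{p}\|_{L^2(\hat{e})}^2$ and $\|p\|_{L^2(E)}^2 = (|E|/|\hat{E}|)\|\hat{p}\|_{L^2(\hat{E})}^2$, the reference estimate transports to $\|p\|_{L^2(e)}^2 \le \hat{C}\, (|\hat{E}|/|\hat{e}|)(|e|/|E|)\|p\|_{L^2(E)}^2$. Combined with the scalar reduction above, this yields \eqref{eqn:trace_ineq} with $C_t = \max_{\hat{E}}(\hat{C}\, |\hat{E}|/|\hat{e}|)^{1/2}$, the maximum taken over the four reference shapes, so that $C_t$ depends only on $q$ and $\hat{q}$.

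I expect the only delicate point to be the geometric bookkeeping in the change of variables: the factors $|e|/|\hat{e}|$ and $|E|/|\hat{E}|$ are exactly the (constant, since $F_E$ is affine) surface and volume Jacobians, and it is precisely their ratio that produces the scaling $|e|^{1/2}|E|^{-1/2}$ while cancelling all dependence on $B$. Keeping these Jacobians separate — rather than bounding them through shape-regularity — is what makes $C_t$ independent of the mesh and of the element aspect ratio, and is the reason $\n_e$ must be extracted as a constant vector before the reference estimate is invoked.
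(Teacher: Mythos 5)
The paper does not prove this lemma at all: it is ``recalled'' as a known result (the standard reference here is Rivi\`ere's monograph, cited for the following lemma), so there is no in-paper argument to compare against. Your proof is the standard one --- reduce $z\,\nabla v\cdot\n_e$ to a scalar polynomial of degree $q-1+\hat q$ using that $\n_e$ is constant on each flat face, invoke norm equivalence on a fixed reference element, and transport by the exact affine surface and volume Jacobians $|e|/|\hat e|$ and $|E|/|\hat E|$ --- and it is correct, including the key point that keeping the two Jacobians separate makes $C_t$ depend only on $q$ and $\hat q$ and not on the element geometry.
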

We now recall positivity results for $a_{\mathcal{M}_s}$ and $a_{\mathcal{D}}$ \cite{riviere2008}.
\begin{lemma}\label{lem:coerc}
Let $\vartheta_h$ and $z_h$ be in $\pspace_q(\Tau_h)$ 
for integer $q\geq 1$. We have
\[
a_{\mathcal{M}_s}(z_h;\vartheta_h,\vartheta_h) \geq 0.
\]
Let $N_0$ denotes the maximum number of neighbours an element can have and  
assume that the penalty parameter $\sigma_\mathcal{D}$ is large enough, namely
\begin{equation}\label{eq:sigmabound} 
\sigma_{\mathcal{D}} \geq  4 C_t^2 N_0. 
\end{equation}
Then we have
\[
a_\mathcal{D}(\vartheta_h,\vartheta_h) \geq 0.
\]
\end{lemma}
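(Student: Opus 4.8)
The plan is to prove both inequalities by inserting $w=\vartheta=\vartheta_h$ into the two forms and analysing the resulting interface contributions. The bound for $a_{\mathcal{M}_s}$ follows immediately from the deliberately non-symmetric construction of that form, whereas the bound for $a_\mathcal{D}$ is the classical symmetric interior penalty coercivity estimate and is where essentially all the work lies; it is at this point that the trace inequality of Lemma~\ref{lem:trace_ineq} and the threshold \eqref{eq:sigmabound} enter.

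For $a_{\mathcal{M}_s}$ I would set both arguments equal to $\vartheta_h$. The second interface sum, $-\sum_{e\in\Gamma_h}\int_e \eavg{M_s(z_h)\nabla\vartheta_h\cdot\n_e}\edif{\vartheta_h}$, and the third interface sum, which carries the opposite sign $+\sum_{e\in\Gamma_h}\int_e\eavg{M_s(z_h)\nabla\vartheta_h\cdot\n_e}\edif{\vartheta_h}$, then coincide and cancel exactly. What remains is $\sum_{E\in\Tau_h}\int_E M_s(z_h)|\nabla\vartheta_h|^2 + \frac{\sigma_\mathcal{M}}{h}\sum_{e\in\Gamma_h}\int_e\edif{\vartheta_h}^2$, which is nonnegative because the mobility $M_s$ is nonnegative and the penalty weight is positive. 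No lower bound on $\sigma_\mathcal{M}$ is required, which is precisely the benefit of the non-symmetric interface treatment noted just after the definition of the forms.

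For $a_\mathcal{D}$, substituting $w=\vartheta=\vartheta_h$ yields $\sum_{E\in\Tau_h}\|\nabla\vartheta_h\|_{L^2(E)}^2 - 2\sum_{e\in\Gamma_h}\int_e\eavg{\nabla\vartheta_h\cdot\n_e}\edif{\vartheta_h} + \frac{\sigma_\mathcal{D}}{h}\sum_{e\in\Gamma_h}\|\edif{\vartheta_h}\|_{L^2(e)}^2$, where the two consistency sums now add rather than cancel, since $a_\mathcal{D}$ is symmetric. I would control the indefinite middle term face by face: first Cauchy--Schwarz on each $e$, then expand $\eavg{\nabla\vartheta_h\cdot\n_e}$ as the half-sum of the two element traces and apply Lemma~\ref{lem:trace_ineq} with $z=1$ to bound each trace $\|\nabla\vartheta_h\cdot\n_e\|_{L^2(e)}$ by $C_t|e|^{1/2}|E|^{-1/2}\|\nabla\vartheta_h\|_{L^2(E)}$, and finally Young's inequality with a free parameter $\delta$ to split into a bulk part proportional to $\sum_{E}\|\nabla\vartheta_h\|_{L^2(E)}^2$ and a jump part proportional to $\tfrac1\delta\sum_{e}\|\edif{\vartheta_h}\|_{L^2(e)}^2$. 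Reassembling the bulk part element by element, each $E$ is counted at most $N_0$ times, which produces the factor $N_0$; choosing $\delta$ so that the bulk part is absorbed into $\sum_{E}\|\nabla\vartheta_h\|_{L^2(E)}^2$ then drives the residual jump coefficient below $\frac{\sigma_\mathcal{D}}{h}$ exactly when $\sigma_\mathcal{D}$ meets the threshold \eqref{eq:sigmabound}, giving $a_\mathcal{D}(\vartheta_h,\vartheta_h)\ge 0$.

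The main obstacle is the bookkeeping in this last step: one must keep the factors of $\tfrac12$ coming from the average and the mesh ratios $|e|/|E|$ coming from the trace inequality consistent with the global scaling $h$ in the penalty, and track the combinatorial count $N_0$, so that the Young parameter $\delta$ can be tuned to deliver precisely the constant $4C_t^2N_0$. Everything else is routine, and the result is in any case the standard SIPG coercivity recorded in \cite{riviere2008}.
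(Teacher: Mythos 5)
Your proof is correct and follows the standard route: the paper itself states Lemma~\ref{lem:coerc} without proof, deferring to \cite{riviere2008}, and your argument --- exact cancellation of the skew consistency terms for $a_{\mathcal{M}_s}$ (using $M_s\ge 0$), and Cauchy--Schwarz plus the trace inequality of Lemma~\ref{lem:trace_ineq} and Young's inequality with the $N_0$ overlap count for $a_\mathcal{D}$ --- is precisely the argument given there. The only point left implicit is the shape-regularity bound relating $|e|/|E|$ to $1/h$, needed to reconcile the local trace constants with the global $h^{-1}$ penalty scaling and land on the exact threshold $4C_t^2N_0$, which you correctly identify as bookkeeping.
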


\subsection{Discrete energy stability}\label{sec:disc_energy_decay}
We now show that for the closed non-advective system, the discrete free energy decays in a consistent manner under the assumption
that the numerical approximation of the surfactant remains non-negative. While the maximum principle cannot
be obtained theoretically for the discontinuous Galerkin solution, the following proposition states
an important property for physical systems.
Numerical results in Section~\ref{sec:results} show the decay of the numerical energy and confirm the theoretical result. 

The discrete total energy at time $t_n$ is defined by
\begin{equation}\label{eqn:discrete_energy}
\mathcal{F}^n_h = (\Phi(c^n_h),1) 
+ \frac{\mbox{Cn}^2}{2} a_\mathcal{D}(c^n_h,c_h^n) 
+ \alpha_2 (\Psi(s^n_h),1) - \alpha_{3} (\Phi(c^n_h),s^n_h) 
+  \alpha_{4} ((c^n_h)^2, s_h^n).
\end{equation}
\begin{proposition}\label{thm:energy_decay}
Assume that $\vel = {\bf 0}$ and assume that $\sigma_\mathcal{D}$ satisfies \eqref{eq:sigmabound}.
Assume that the numerical approximation for the surfactant is non-negative ($s_h^n \geq 0$).
Then the scheme \eqref{eqn:fd_scheme} ensures the decay of total free-energy: 
\begin{equation}\label{eqn:disc_energy_decay}
\mathcal{F}^n_h \leq \mathcal{F}^{n-1}_h, \quad \forall 1\leq n\leq N_T.
\end{equation}
\end{proposition}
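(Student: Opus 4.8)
The plan is to test the four relations of the closed, non-advective scheme \eqref{eqn:fd_scheme} (with $\vel=\mathbf{0}$, so that the advective forms $a_\mathcal{A}$ and $b_\mathcal{A}$ vanish; in the closed setting the inflow boundary is empty, so $a_{\mathcal{D},\Gamma^\mathrm{in}}$ and the functional $\ddif{\cdot}$ drop out as well) with the discrete chemical potentials and the time increments. Concretely I would take $\vartheta_h=\mu_{ch}^n$ in \eqref{eqn:fd_schemea}, $\vartheta_h=\mu_{sh}^n$ in \eqref{eqn:fd_schemeb}, $\vartheta_h=c_h^n-c_h^{n-1}$ in \eqref{eqn:fd_schemec}, and $\vartheta_h=s_h^n-s_h^{n-1}$ in \eqref{eqn:fd_schemed}. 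From the first two relations, the positivity statements of Lemma~\ref{lem:coerc} (this is where \eqref{eq:sigmabound} enters) give
\[
(c_h^n-c_h^{n-1},\mu_{ch}^n)\le 0,\qquad (s_h^n-s_h^{n-1},\mu_{sh}^n)\le 0 .
\]
The goal is then to substitute the expressions for $(\mu_{ch}^n,c_h^n-c_h^{n-1})$ and $(\mu_{sh}^n,s_h^n-s_h^{n-1})$ coming from \eqref{eqn:fd_schemec}--\eqref{eqn:fd_schemed} and to prove the chain
\[
\mathcal{F}^n_h-\mathcal{F}^{n-1}_h\;\le\;(c_h^n-c_h^{n-1},\mu_{ch}^n)+(s_h^n-s_h^{n-1},\mu_{sh}^n)\;\le\;0 ,
\]
i.e. to bound the energy increment term by term from above by the (non-positive) right-hand side.

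The bulk and gradient contributions are handled by the convex/concave splitting and the symmetry of $a_\mathcal{D}$. Writing $\Phi=\Phi_++\Phi_-$ with $\Phi_+$ convex and $\Phi_-$ concave, convexity gives $\Phi_+(c_h^n)-\Phi_+(c_h^{n-1})\le\Phi_+'(c_h^n)(c_h^n-c_h^{n-1})$ and concavity gives $\Phi_-(c_h^n)-\Phi_-(c_h^{n-1})\le\Phi_-'(c_h^{n-1})(c_h^n-c_h^{n-1})$, which reproduce exactly the $\Phi_+'(c_h^n)$ and $\Phi_-'(c_h^{n-1})$ terms appearing when \eqref{eqn:fd_schemec} is tested. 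For the gradient energy I would use the symmetric-bilinear identity
\[
\tfrac{\Cn^2}{2}\big(a_\mathcal{D}(c_h^n,c_h^n)-a_\mathcal{D}(c_h^{n-1},c_h^{n-1})\big)=\Cn^2 a_\mathcal{D}(c_h^n,c_h^n-c_h^{n-1})-\tfrac{\Cn^2}{2}a_\mathcal{D}(c_h^n-c_h^{n-1},c_h^n-c_h^{n-1}),
\]
and drop the last (non-positive after the minus sign) term using $a_\mathcal{D}(\cdot,\cdot)\ge 0$ from Lemma~\ref{lem:coerc}. Convexity of $\Psi$ treats $\alpha_2(\Psi(s_h^n)-\Psi(s_h^{n-1}),1)\le\alpha_2(\Psi'(s_h^n),s_h^n-s_h^{n-1})$ in the same way, matching the term produced by testing \eqref{eqn:fd_schemed}.

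The two coupling terms are the heart of the argument, and they are where the construction of the scheme pays off. The quartic coupling is an \emph{exact} algebraic identity: using $(c_h^n)^2-(c_h^{n-1})^2=(c_h^n+c_h^{n-1})(c_h^n-c_h^{n-1})$ one verifies
\[
\alpha_4\big((c_h^n)^2,s_h^n\big)-\alpha_4\big((c_h^{n-1})^2,s_h^{n-1}\big)=\alpha_4\big(c_h^{n-1}+c_h^n,\,s_h^n(c_h^n-c_h^{n-1})\big)+\alpha_4\big((c_h^{n-1})^2,\,s_h^n-s_h^{n-1}\big),
\]
whose right-hand side is precisely the $\alpha_4$ contribution of \eqref{eqn:fd_schemec} and \eqref{eqn:fd_schemed}; no inequality and no sign hypothesis is needed here. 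The delicate term, and the main obstacle, is the $\alpha_3$ coupling $-\alpha_3\big((\Phi(c_h^n),s_h^n)-(\Phi(c_h^{n-1}),s_h^{n-1})\big)$. A telescoping split in $s$ produces the exact match $-\alpha_3(\Phi(c_h^{n-1}),s_h^n-s_h^{n-1})$ with the term from \eqref{eqn:fd_schemed}, and reduces everything to the estimate
\[
\big(\Phi_+'(c_h^{n-1})+\Phi_-'(c_h^n),\,s_h^n(c_h^n-c_h^{n-1})\big)\le\big(s_h^n,\,\Phi(c_h^n)-\Phi(c_h^{n-1})\big).
\]
This I would obtain from the pointwise inequality $(\Phi_+'(c^{n-1})+\Phi_-'(c^n))(c^n-c^{n-1})\le\Phi(c^n)-\Phi(c^{n-1})$, which is the sum of the convexity bound for $\Phi_+$ taken at $c^{n-1}$ and the concavity bound for $\Phi_-$ taken at $c^n$ — note that the implicit/explicit roles of $\Phi_\pm$ are \emph{swapped} relative to the bulk term, precisely so as to absorb the minus sign in front of $\alpha_3$. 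The crucial point is that this pointwise inequality may be integrated against $s_h^n$ only when the weight is non-negative; together with $\alpha_3\ge 0$ this delivers the required one-sided bound on the $\alpha_3$ contribution. This is exactly where the non-negativity hypothesis on the surfactant is indispensable, and the estimate would otherwise fail. Collecting the five contributions then yields the displayed chain and hence \eqref{eqn:disc_energy_decay}.
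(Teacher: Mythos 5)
Your proposal is correct and follows essentially the same route as the paper's proof: the same choice of test functions, the same convex--concave splitting for $\Phi_\pm$ and $\Psi$ (you state it via first-order convexity inequalities where the paper uses Taylor expansions with signed remainders, which is equivalent), the same exact telescoping of the $\alpha_3$ and $\alpha_4$ couplings with the non-negativity of $s_h^n$ entering only in the $\alpha_3$ term, and the same use of Lemma~\ref{lem:coerc} together with the symmetry of $a_\mathcal{D}$. The only addition is your explicit remark that the inflow terms $a_{\mathcal{D},\Gamma^\mathrm{in}}$ and $b_\mathcal{D}$ drop out in the closed setting, which the paper leaves implicit.
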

\begin{proof}
We choose $\vartheta_h = \mu_{ch}^n$ in \eqref{eqn:fd_schemea}, 
$\vartheta_h = \mu_{sh}^n$ in \eqref{eqn:fd_schemeb}, 
$\vartheta_h =  c_h^n-c_h^{n-1}$ in \eqref{eqn:fd_schemec} 
and $\vartheta_h = s_h^n-s_h^{n-1}$ in \eqref{eqn:fd_schemed}. We add the resulting equations and obtain
\begin{align}
\left(\Phi_+'(c_h^n)+\Phi_-'(c_h^{n-1}), c_h^n-c_h^{n-1}\right)
-\alpha_{3} \left(\Phi_+'(c_h^{n-1})+\Phi_-'(c_h^n), s_h^n(c_h^n-c_h^{n-1})\right)
+\alpha_2\left(\Psi'(s_h^n),s_h^n-s_h^{n-1}\right)\nonumber\\
+\Cn^2 a_\mathcal{D}(c_h^n,c_h^n-c_h^{n-1})
-\alpha_{3} (\Phi(c_h^{n-1}), s_h^n-s_h^{n-1})
+\alpha_{4} (s_h^n, (c_h^n)^2-(c_h^{n-1})^2)
+\alpha_{4} (s_h^n-s_h^{n-1}, (c_h^{n-1})^2)\nonumber\\
=-\frac{\tau_n \, M_c}{\Pe_c} a_\mathcal{D}(\mu_{ch}^n,\mu_{ch}^n)
-\frac{\tau_n}{\Pe_s} a_{\mathcal{M}_s}(s_h^{n-1};\mu_{sh}^n,\mu_{sh}^n)
\leq 0,\label{eq:in1}
\end{align}
thanks to Lemma~\ref{lem:coerc}.

Using Taylor expansions, there exist $\xi_1, \xi_2, \xi_3, \xi_4$ between $c^{n-1}_h$ and $c^{n}_h$ 
and $\xi_5$ between $s_h^{n-1}$ and $s_h^n$ such that
\begin{subequations}\label{eqn:fd3}
\begin{align}
\Phi_+^\prime(c^n_h)(c^n_h - c^{n-1}_h) & = \Phi_+(c^n_h) - \Phi_+(c^{n-1}_h) +\frac{1}{2} \Phi_+^{\prime \prime}(\xi_1)(c^n_h - c^{n-1}_h)^2, \label{eqn:fd3a}\\ 
\Phi_-^\prime(c^{n-1}_h)(c^n_h - c^{n-1}_h) & = \Phi_-(c^n_h) - \Phi_-(c^{n-1}_h) -\frac{1}{2} \Phi_-^{\prime \prime}(\xi_2)(c^n_h - c^{n-1}_h)^2,\label{eqn:fd3b}\\
\Phi_+^\prime(c^{n-1}_h)(c^n_h - c^{n-1}_h) & = \Phi_+(c^n_h) - \Phi_+(c^{n-1}_h) - \frac{1}{2} \Phi_+^{\prime \prime}(\xi_3)(c^n_h - c^{n-1}_h)^2,\label{eqn:fd3c}\\
\Phi_-^\prime(c^n_h)(c^n_h - c^{n-1}_h) & = \Phi_-(c^n_h) - \Phi_-(c^{n-1}_h) +\frac{1}{2} \Phi_-^{\prime \prime}(\xi_4)(c^n_h - c^{n-1}_h)^2,\label{eqn:fd3d}\\
\Psi^\prime(s^n_h)(s^n_h - s^{n-1}_h) & = \Psi(s^n_h) - \Psi(s^{n-1}_h) +\frac{1}{2} \Psi^{\prime \prime}(\xi_5)(s^n_h - s^{n-1}_h)^2. \label{eqn:fd3e}
\end{align}
\end{subequations}
Since $\Phi_+$ is convex and $\Phi_-$ is concave, we have with \eqref{eqn:fd3a} and \eqref{eqn:fd3b}
\begin{eqnarray*}
\left(\Phi_+'(c_h^n)+\Phi_-'(c_h^{n-1}), c_h^n-c_h^{n-1}\right)
&=& \left(\Phi(c_h^n)-\Phi(c_h^{n-1}),1\right)
+ \frac12 \left(\Phi_+''(\xi_1), (c_h^n-c_h^{n-1})^2\right)
- \frac12 \left(\Phi_-''(\xi_2), (c_h^n-c_h^{n-1})^2\right)
\\
&\geq& \left(\Phi(c_h^n)-\Phi(c_h^{n-1}),1\right).
\end{eqnarray*}
Similarly, with \eqref{eqn:fd3b}, \eqref{eqn:fd3c} and the assumption $s_h\geq 0$, we have
\begin{align*}
-\alpha_{3} \left(\Phi_+'(c_h^{n-1})+\Phi_-'(c_h^n), s_h^n(c_h^n-c_h^{n-1})\right)  \geq -\alpha_3 \left(\Phi(c_h^n)-\Phi(c_h^{n-1}),s_h^n\right),
\end{align*}
and since $\Psi$ is convex,  with \eqref{eqn:fd3e}, we have
\[
\alpha_2\left(\Psi'(s_h^n),s_h^n-s_h^{n-1}\right) \geq \alpha_2 \left(\Psi(s_h^n)-\Psi(s_h^{n-1}),1\right).
\]
The inequality \eqref{eq:in1} simplifies to:
\begin{align*}
\left(\Phi(c_h^n)-\Phi(c_h^{n-1}), 1\right)
-\alpha_{3} \left((\Phi(c_h^{n}), s_h^n)-(\Phi(c_h^{n-1}),s_h^{n-1})\right)
+\alpha_2\left(\Psi(s_h^n)-\Psi(s_h^{n-1}),1\right)\\
+\Cn^2 a_\mathcal{D}(1;c_h^n,c_h^n-c_h^{n-1})
+\alpha_{4} \left( (s_h^n, (c_h^n)^2)-(s_h^{n-1},(c_h^{n-1})^2)\right)
\leq 0.
\end{align*}
Since the form $a_{\mathcal{D}}(\cdot,\cdot)$ is symmetric and bilinear, we have
\[
\frac{1}{2}a_{\mathcal{D}}(c_h^n,c_h^n)
-\frac12 a_{\mathcal{D}}(c_h^{n-1},c_h^{n-1})
\leq a_\mathcal{D}(c_h^n,c_h^n-c_h^{n-1}).
\]
This bound with the one above concludes the proof.

\end{proof}


%

\section{Numerical Results}\label{sec:results}
\label{sec:results}
We demonstrate the performance of the proposed IPDG scheme, by using it to solve a number of problems with varying complexity. Piecewise linear approximation spaces are used with the penalty parameters set as $\sigma_\mathcal{D} = 2.0$, $\sigma_\mathcal{M} =2.0$ and $\sigma_{\mathcal{D}_1} = 8.0$. We use the tensor product of one dimensional Legendre polynomials to form the basis in each element. Following the strategy of \cite{FRANK18}, the implicit system describing the scheme \eqref{eqn:fd_scheme} is reduced using Schur complement to a smaller system solving for $c^n_h$ and $s^n_h$. The reduced system is solved using a Newton's method, followed by a direct computation of $\mu_{ch}^n$ and $\mu_{sh}^n$. In all experiments, we choose $\Cn = h$ and a uniform time-step $\tau_n = \tau = 10^{-3}$, unless specified otherwise. In all two and three-dimensional plots for the order parameter, the phase corresponding to $c=1$ will be depicted in red, the phase corresponding to $c=-1$ will be depicted in blue, and the diffuse-interface by a steep color-gradient.

\subsection{Adsorption isotherm}
We begin by testing the capability of the numerical scheme to capture key physical properties of the underlying model at equilibrium. In particular, we consider the equilibrium adsorption isotherm which relates the surfactant concentration at the surface to the bulk surfactant concentration. The choice of the free energy terms in \eqref{eqn:helmenergy} plays a crucial role in designing schemes that can faithfully recover the isotherm curves \cite{SMAN06,DIAM96,LIU10,ENGBLOM13}.

We consider a one-dimensional planar interface problem and use the subscript notations '$i$' and '$b$' to denote quantities defined at the interface and the bulk, respectively. We consider a dilute solution regime characterised by a small bulk surfactant concentration, i.e., $s_b \ll 1$. In order to carry out the analysis and obtain analytical expressions of equilibrium solution, we assume that the order parameter profile is independent of the surfactant loading at equilibrium \cite{DIAM96,LIU10}. Under these assumption, the order parameter $c$ at equilibrium is given by
\begin{equation}\label{eqn:planar_c}
c(x) = \text{tanh}\left( \frac{x-x_o}{\sqrt{2} \Cn}\right),
\end{equation}
 centered at $x_o=0.5$. Note that \eqref{eqn:planar_c} is the steady-state solution of \eqref{eqn:model} in the absence of a surfactant. 
 
At equilibrium, the chemical potential attains a constant value in the whole domain. Equating the chemical potential for the surfactant $\cps$ in the bulk to the value at any point $x$ in the domain, and using the fact that $c_b = \pm 1$, we can derive the expression for the surfactant
\begin{equation}\label{eqn:eq_s}
s(x) = \frac{s_b}{s_b + (1-s_b)s_q(x)} \approx \frac{s_b}{s_b + s_q(x)},
\end{equation}
where
\begin{equation}\label{eqn:sq}
s_q(x) = \exp \left[ -\frac{1}{\alpha_2} \big( \alpha_3 \Phi(c(x)) + \alpha_4(1-c(x)^2)\big)\right].
\end{equation}
A detailed derivation of these expression can be found in \cite{ENGBLOM13}. Evaluating \eqref{eqn:eq_s} at the interface and noting that $c_i = 0$, we get
\begin{equation}\label{eqn:isotherm}
\begin{aligned}
s_i &= \frac{s_b}{s_b + (1-s_b)s_{q,i}} \approx \frac{s_b}{s_b + s_{q,i}},\\
 s_{q,i} &= \exp \left[ -\frac{1}{\alpha_2} \big( \frac{\alpha_3}{4} + \alpha_4\big)\right].
\end{aligned}
\end{equation}
The relation \eqref{eqn:isotherm} is known as the \textit{Langmuir isotherm} with $s_{q,i}$ being the Langmuir adsorption constant.

We demonstrate that the DG scheme proposed in this work is able to recover the Langmuir isotherm. We consider the one-dimensional simulation on the domain $\Omega = (0,1)$ discretized using $N_{el} = 80$ elements. We set $\Pe_c = 1$, $\Pe_s = 1$, $\alpha_3 = 1.0$, $\alpha_4 = 0.25$ and $\Cn = 0.05 = 4h$. We consider three different isotherm curves by choosing $\alpha_2 \in \{0.1, 0.15, 0.2\}$ and $s_b \in [5\times 10^{-3},10^{-1}]$. The initial condition for the order parameter is set using \eqref{eqn:planar_c}, while the surfactant is prescribed by the shifted profile
\[
s(x) = \frac{s_b}{s_b + s_q(x-0.2)}.
\]
As shown in Figure~\ref{fig:steady_state_soln}, the surfactant profile diffuses to the interface at steady state and matches the equilibrium analytical expression \eqref{eqn:eq_s}. We also plot the numerically obtained values for $s_b$ versus $s_i$ in Figure \ref{fig:isotherms}, which clearly coincide with the analytical Langmuir isotherm curves given by \eqref{eqn:isotherm}.

\begin{figure}[htbp]
\begin{center}
\subfigure[$\alpha_2 = 0.1$]{\includegraphics[width=0.32\textwidth]{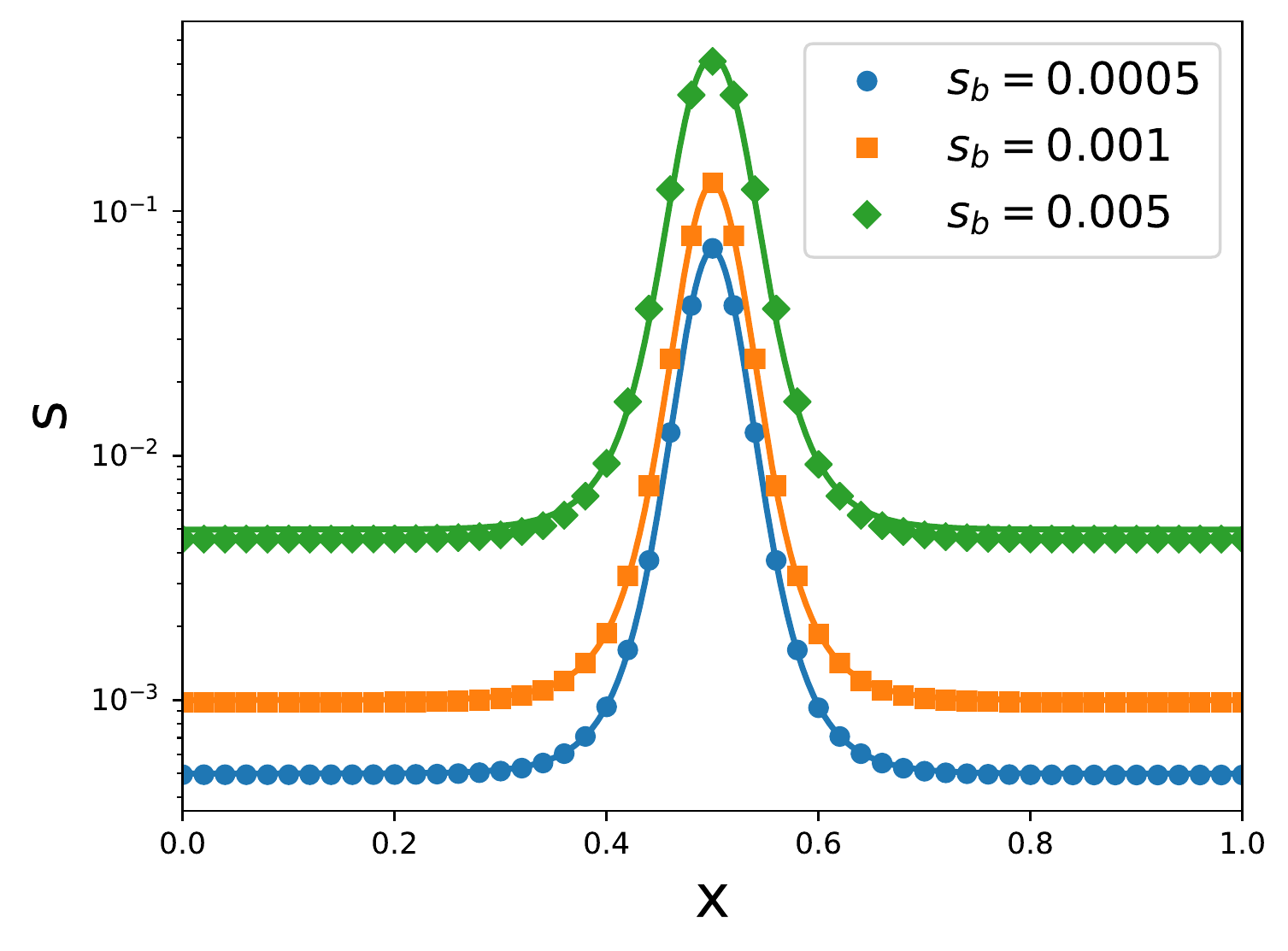}}
\subfigure[$\alpha_2 = 0.15$]{\includegraphics[width=0.32\textwidth]{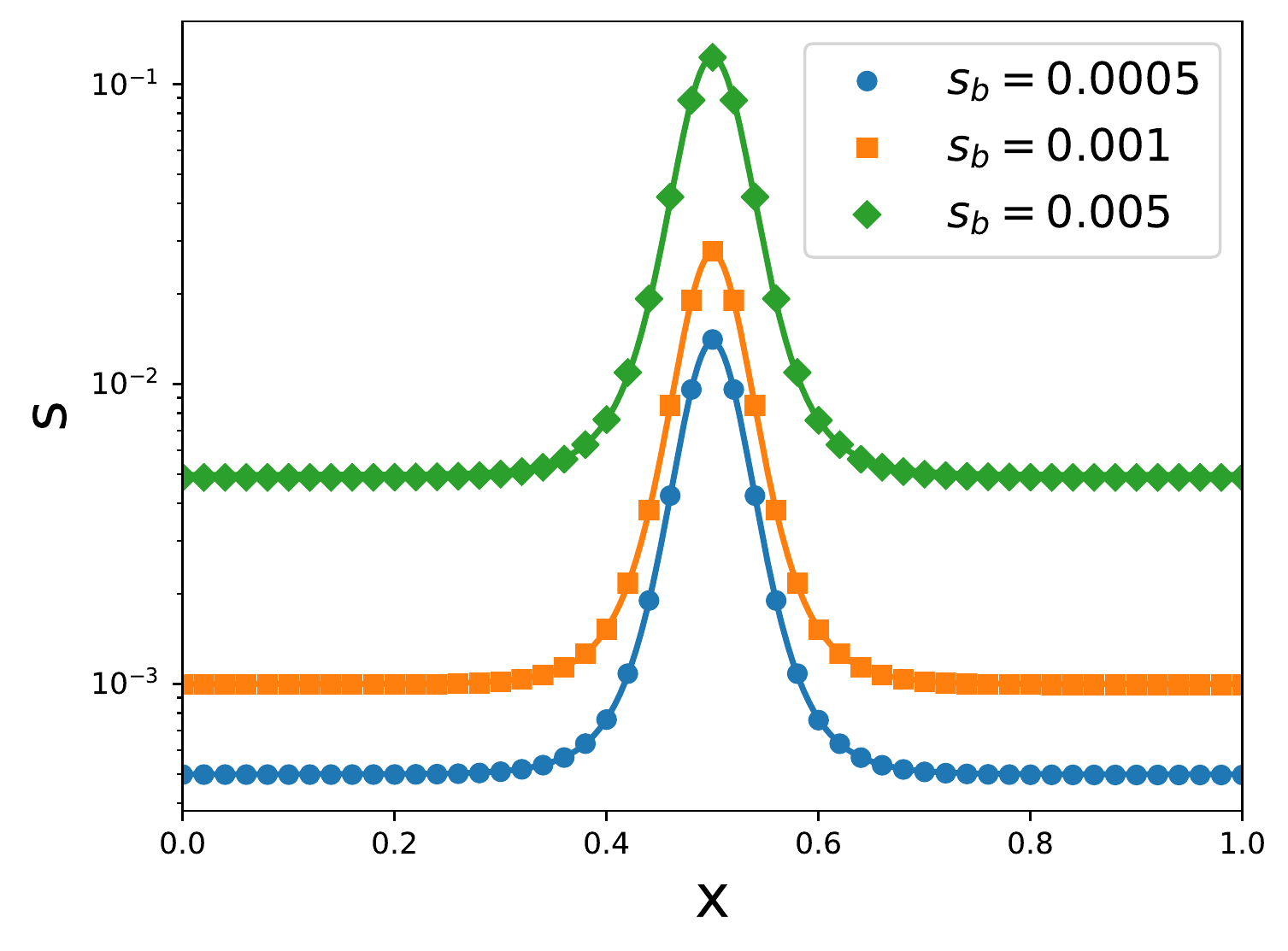}}
\subfigure[$\alpha_2 = 0.2$]{\includegraphics[width=0.32\textwidth]{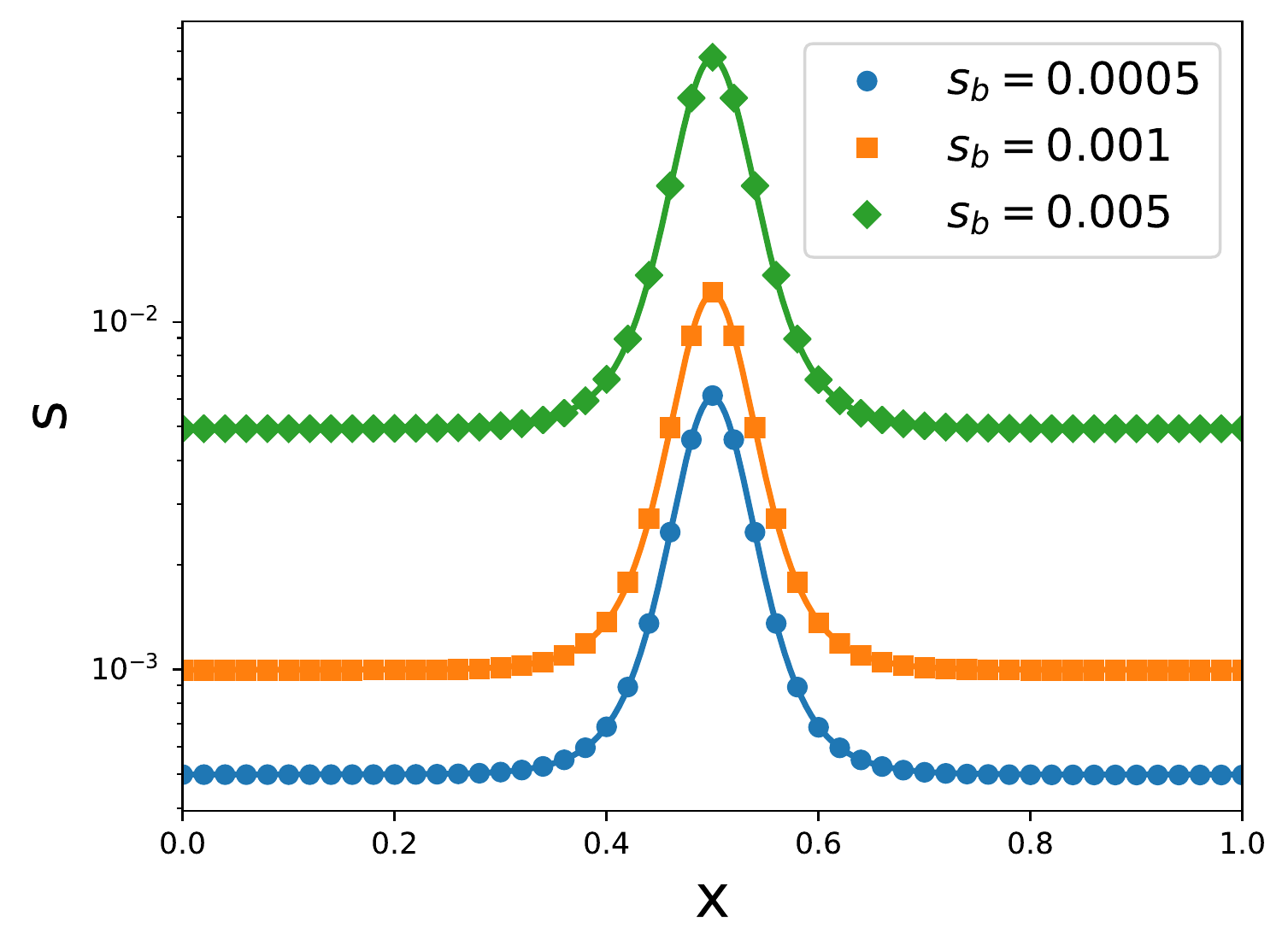}}
\caption{Equilibrium profiles for surfactant for varying values of $\alpha_2$ and $s_b$. The solid lines depict the analytical expression \eqref{eqn:eq_s}, while the markers show the numerical approximation}
\label{fig:steady_state_soln}
\end{center}
\end{figure}

\begin{figure}[htbp]
\begin{center}
\includegraphics[width=0.32\textwidth]{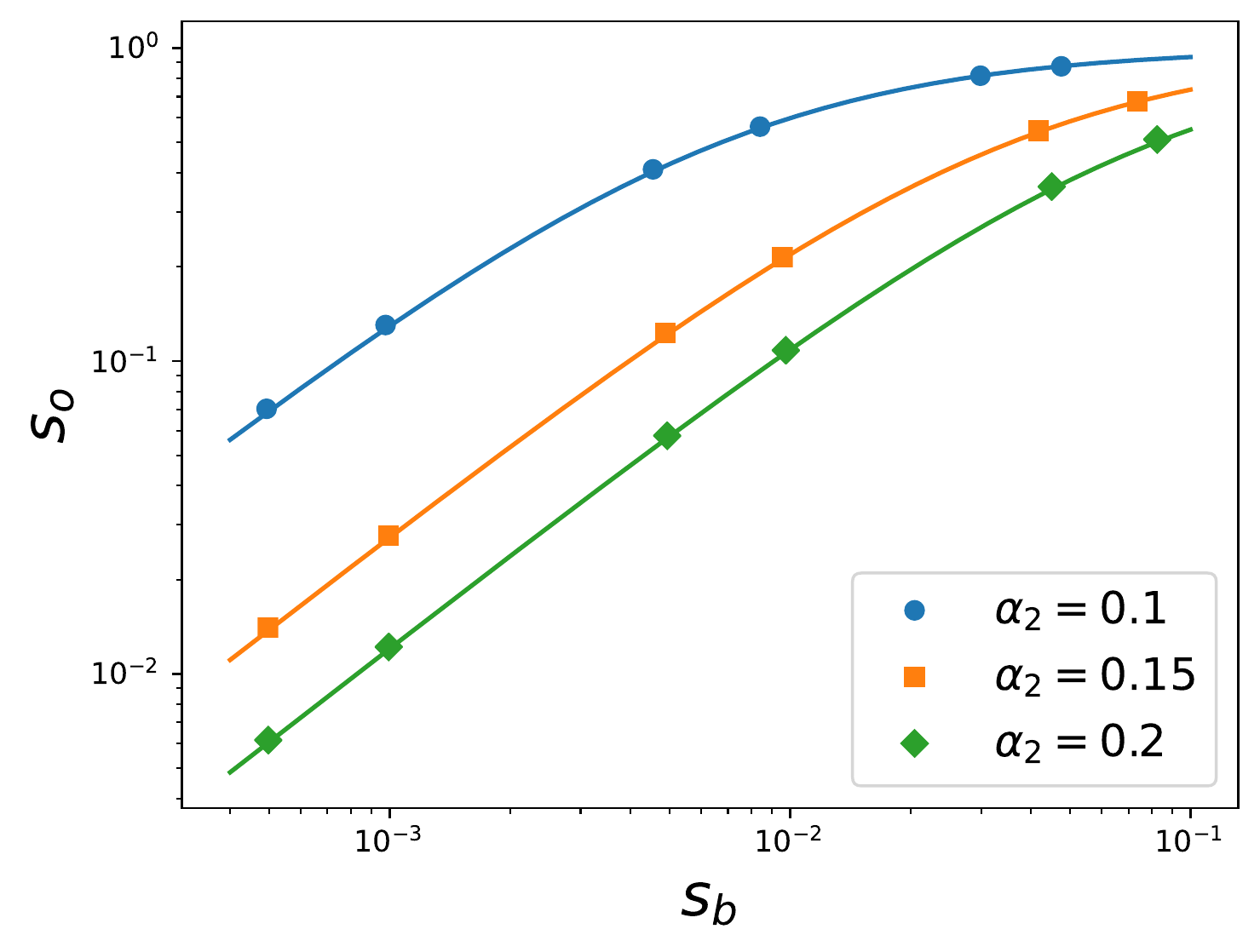}
\caption{Langmuir isotherms for $\alpha_2 \in \{0.1, 0.15, 0.2\}$. The solid lines depict the analytical expression \eqref{eqn:isotherm}, while the markers denote the values obtain from the numerical approximations}
\label{fig:isotherms}
\end{center}
\end{figure}

\subsection{Spinodal-drop interaction}
In order to  better highlight the diffusive dynamics of the order parameter in the presence of a surfactant, we consider a two dimensional non-advective problem where $c$ is initialized as a random constant on each element $E_k$:
\begin{equation}\label{eqn:spinodal_c_ic}
\begin{aligned}
c^0(x,y) \big|_{E_{k}} =  0.2 + 0.001 \omega_k, \quad \omega_k \in \text{rand}([-1,1]),
\end{aligned}
\end{equation}
while the surfactant is initialized as a circular drop
\begin{equation}\label{eqn:spinodal_s_ic}
\begin{aligned}
s^0(x,y)  &= \frac{1}{2} \left(0.5- 0.3\tanh\left(\frac{r_x - r_0}{\sqrt{2} \Cn}\right) \right), \\
 r_x &= \sqrt{(x-0.5)^2 + (y-0.5)^2},
\end{aligned}
\end{equation}
with $r_0 = 0.15$. The initial conditions are also shown in Figure \ref{fig:spinodal_init}. The boundary conditions are set by assuming the system to be closed, i.e., $\partial\Omega = \Gamma^{\mathrm{wall}}$. The domain $(0,1)^2$ is discretized using  $100\times 100$ square elements. The remaining parameters are set as $\Pe_c = 100$, $\Pe_s = 100$, $\alpha_2 = 1$,  $\alpha_3 = 1$ and $\alpha_4 = 1$.

\begin{figure}[htbp]
\begin{center}
\subfigure[Order-parameter]{\includegraphics[width=0.23\textwidth]{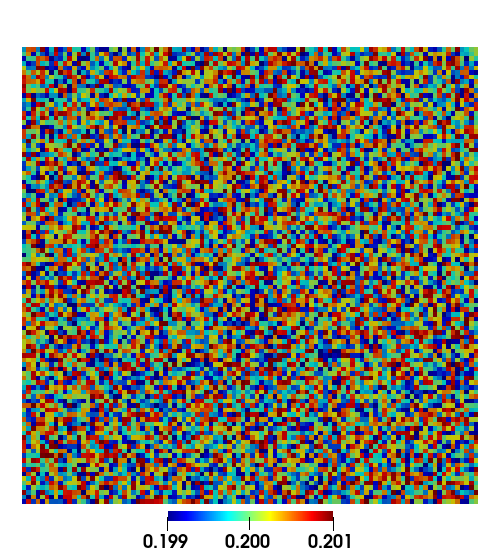}} 
\subfigure[Surfactant]{\includegraphics[width=0.23\textwidth]{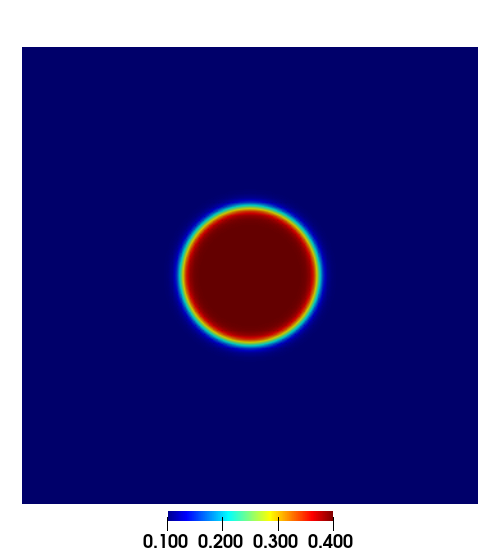}}
\caption{Initial conditions for the spinodal-drop problem: order parameter (left figure) and surfactant (right figure)}
\label{fig:spinodal_init}
\end{center}
\end{figure}

The evolving dynamics of the order parameter are depicted in Figure~\ref{fig:spinodal_c_zero_s} in the absence of any surfactant, i.e., $s\equiv 0$, while Figure~\ref{fig:spinodal_c_with_s} shows the evolution in the presence of a surfactant. The mixture moves towards a state of lower Helmholtz free energy, which is achieved via two key processes. Firstly, the contribution due to the interfacial energy is minimized by lowering the length of the diffusive interface. Thus, the smaller structures tend to coalesce together to form larger globules, i.e., coarsening, thereby reducing the total diffusive interface in the domain. Secondly, the free energy is reduced by forcing the surfactant to move to the diffusive interface and lowering its concentration in the bulk. This phenomena is depicted in Figure \ref{fig:spinodal_s}. We also note that the coalescence of the order parameter is more isotropic in the absence of a surfactant, while  the smaller drops coalesce along concentric circles when a drop surfactant is used. While it is expected that the order parameter will finally merge to a single bubble in both cases (if the simulation is run to steady state), the transient dynamics is strongly influenced by the surfactant.

In a closed system, the total amount of each of the three components is expected to be conserved.  
For the two components that form
two immiscible phases, a simple algebraic argument with the definition of the order parameter shows that conservation of each of the component is obtained by having the average quantity $\int_\Omega c$ constant.  For the surfactant, this also means that
$\int_\Omega s$ is constant.  This is also observed numerically, as shown in Figure \ref{fig:spinodal_mean}. We also show the decay of discrete free energy in Figure \ref{fig:spinodal_energy}. In the absence of surfactant, the free energy is only governed by $F_c$ (see \eqref{eqn:helmenergy}). While the total free energy decays in the presence of a surfactant, there is no guarantee that the individual contributors will decay in time, as can be seen in Figure \ref{fig:spinodal_energy}(b).

\begin{figure}[htbp]
\begin{center}
\subfigure[$t=0.5$]{\includegraphics[width=0.23\textwidth]{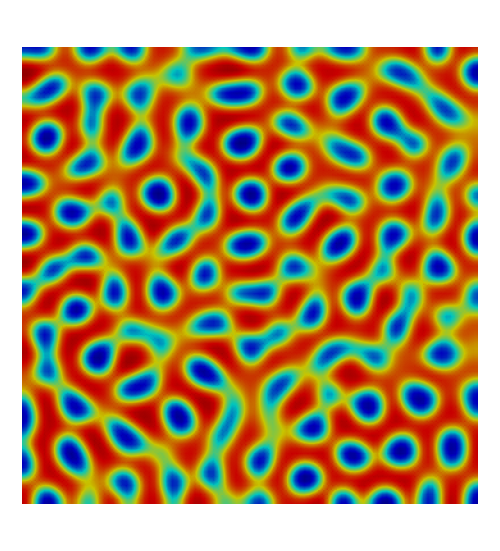}}
\subfigure[$t=1$]{\includegraphics[width=0.23\textwidth]{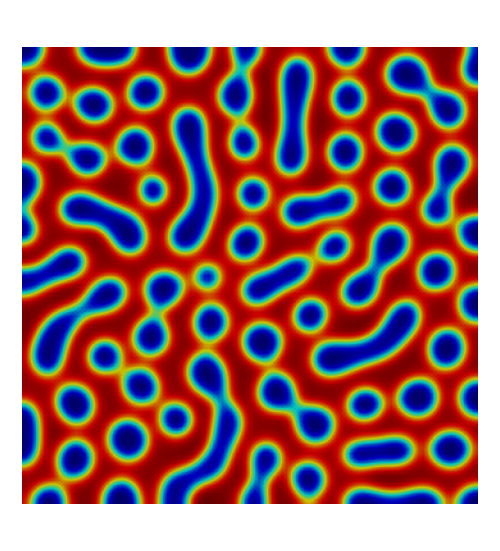}}
\subfigure[$t=5$]{\includegraphics[width=0.23\textwidth]{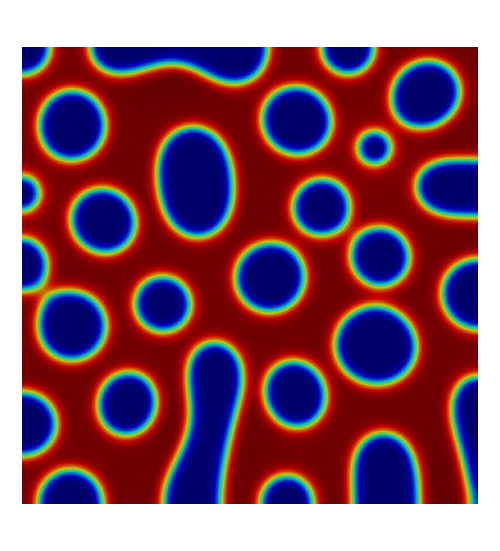}}
\subfigure[$t=50$]{\includegraphics[width=0.23\textwidth]{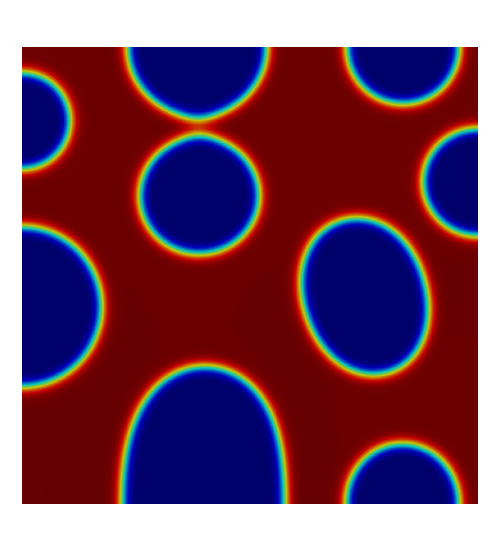}}\\
\caption{Evolution of the order parameter in the spinodal-drop problem in the absence of a surfactant}
\label{fig:spinodal_c_zero_s}
\end{center}
\end{figure}

\begin{figure}[htbp]
\begin{center}
\subfigure[$t=0.5$]{\includegraphics[width=0.23\textwidth]{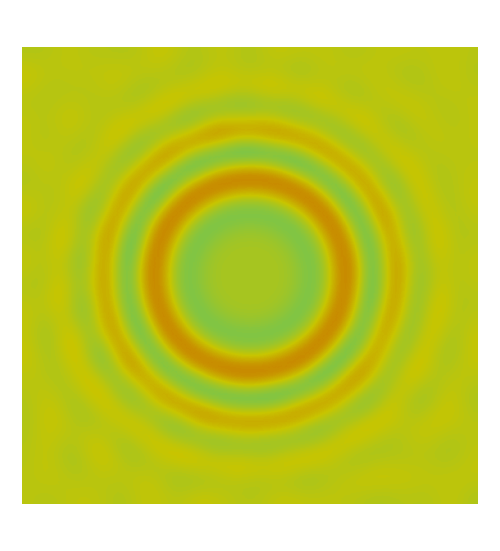}}
\subfigure[$t=1$]{\includegraphics[width=0.23\textwidth]{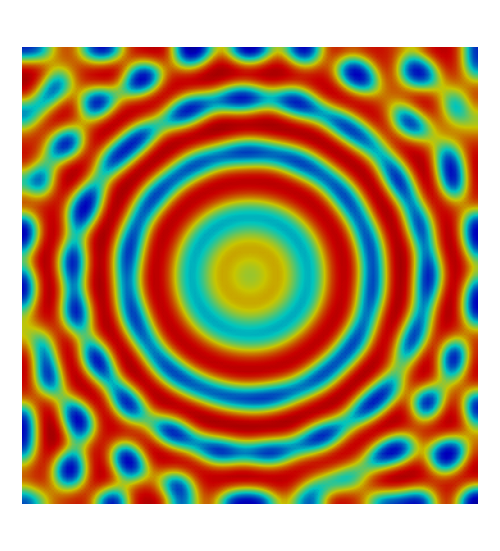}}
\subfigure[$t=5$]{\includegraphics[width=0.23\textwidth]{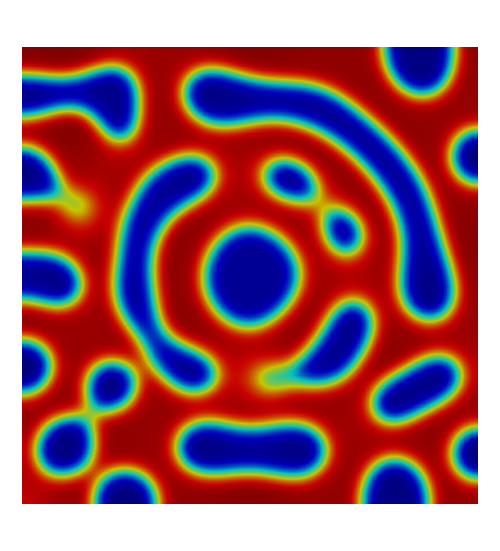}}
\subfigure[$t=50$]{\includegraphics[width=0.23\textwidth]{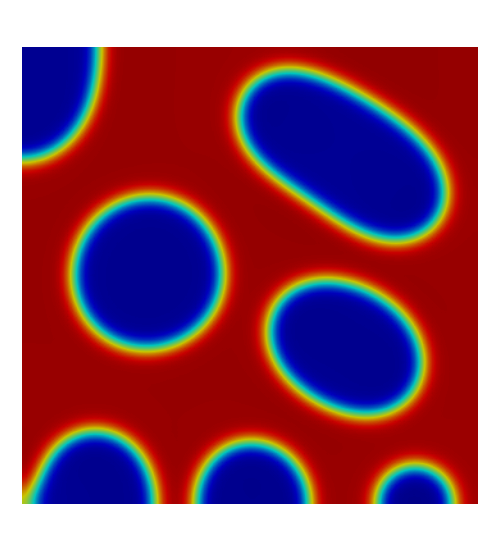}}\\
\caption{Evolution of the order parameter in the spinodal-drop problem in the presence of a surfactant}
\label{fig:spinodal_c_with_s}
\end{center}
\end{figure}

\begin{figure}[htbp]
\begin{center}
\subfigure[$t=0.5$]{\includegraphics[width=0.23\textwidth]{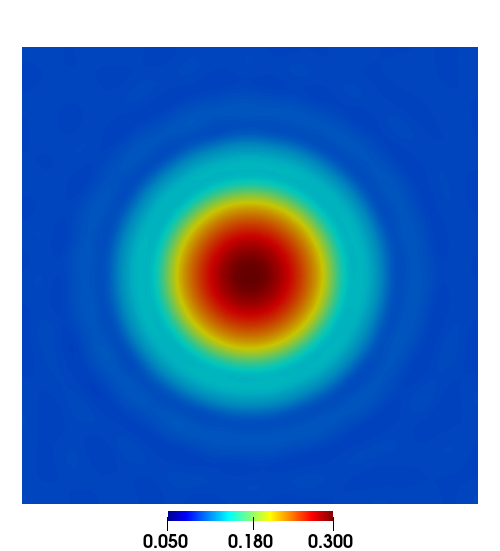}}
\subfigure[$t=1$]{\includegraphics[width=0.23\textwidth]{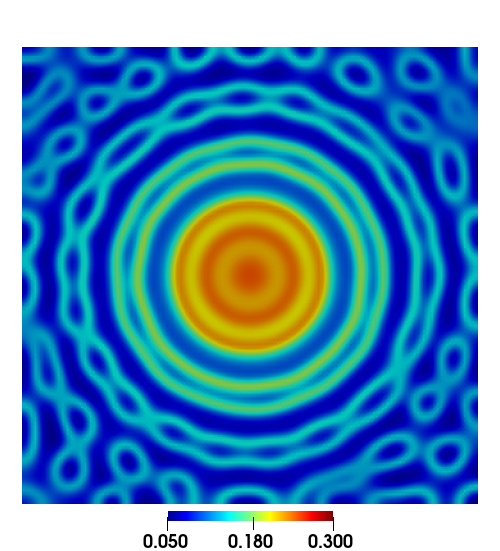}}
\subfigure[$t=5$]{\includegraphics[width=0.23\textwidth]{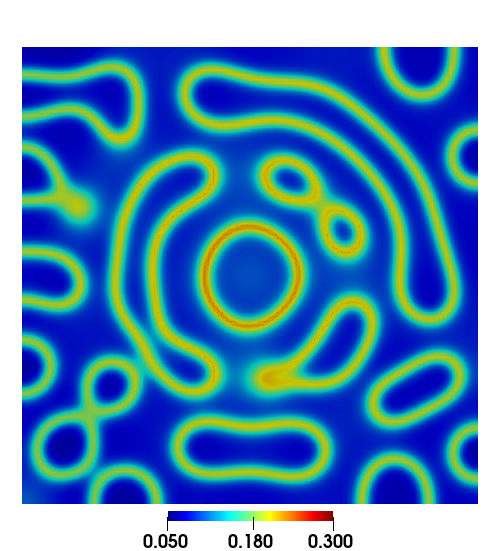}}
\subfigure[$t=50$]{\includegraphics[width=0.23\textwidth]{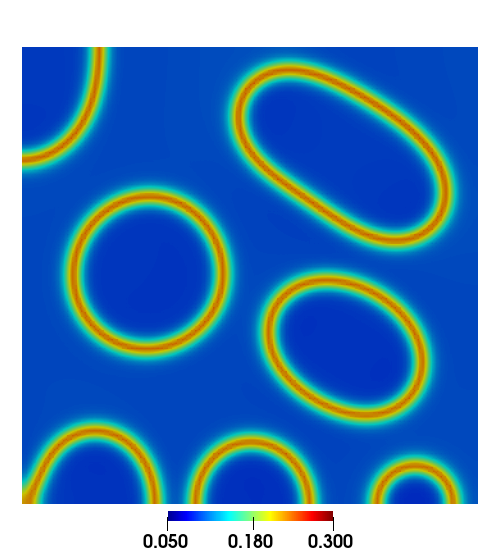}}\\
\caption{Evolution of the surfactant in the spinodal-drop problem}
\label{fig:spinodal_s}
\end{center}
\end{figure}

\begin{figure}[htbp]
\begin{center}
\subfigure[Mean $c$]{\includegraphics[width=0.32\textwidth]{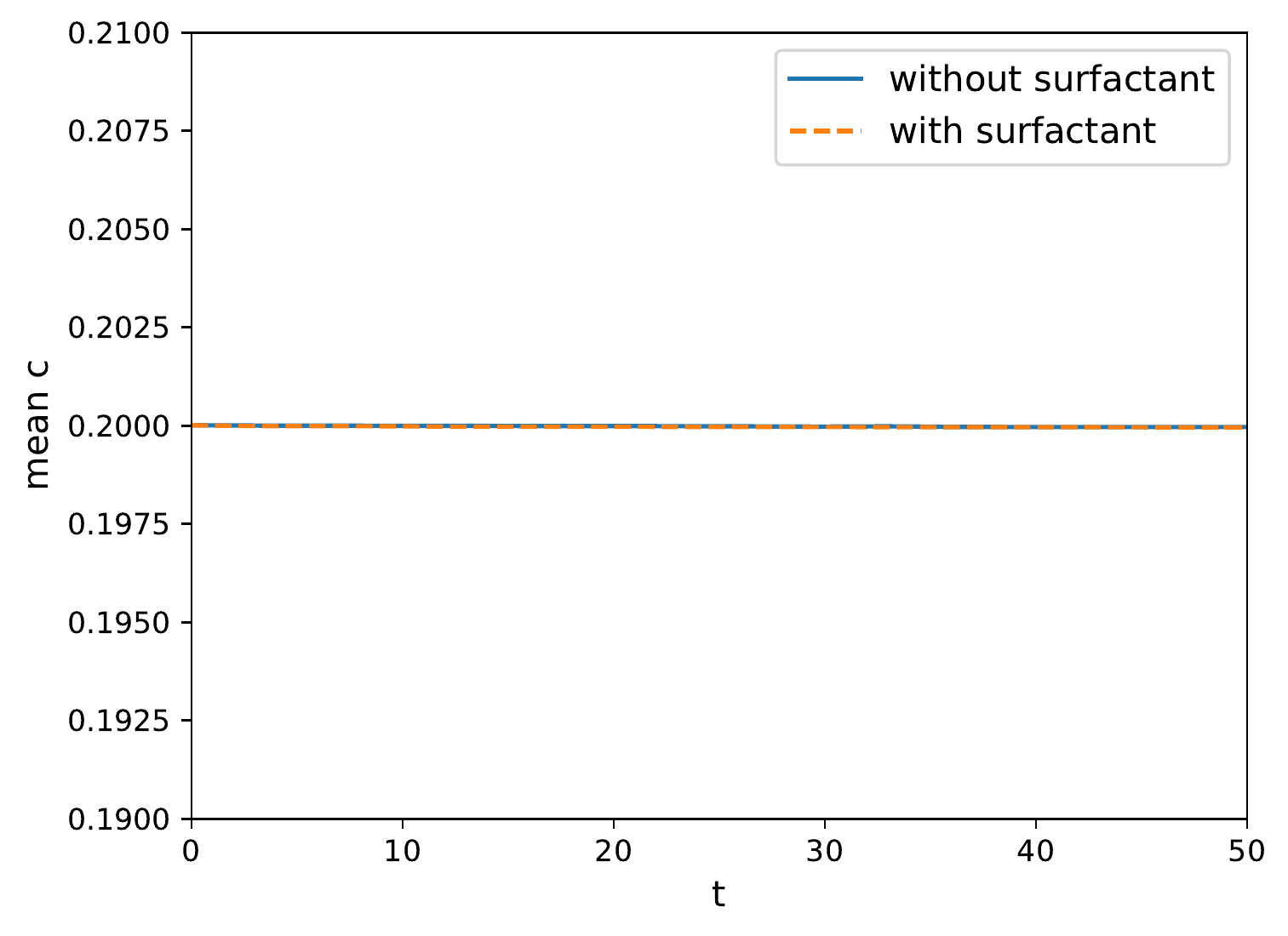}}
\subfigure[Mean $s$]{\includegraphics[width=0.32\textwidth]{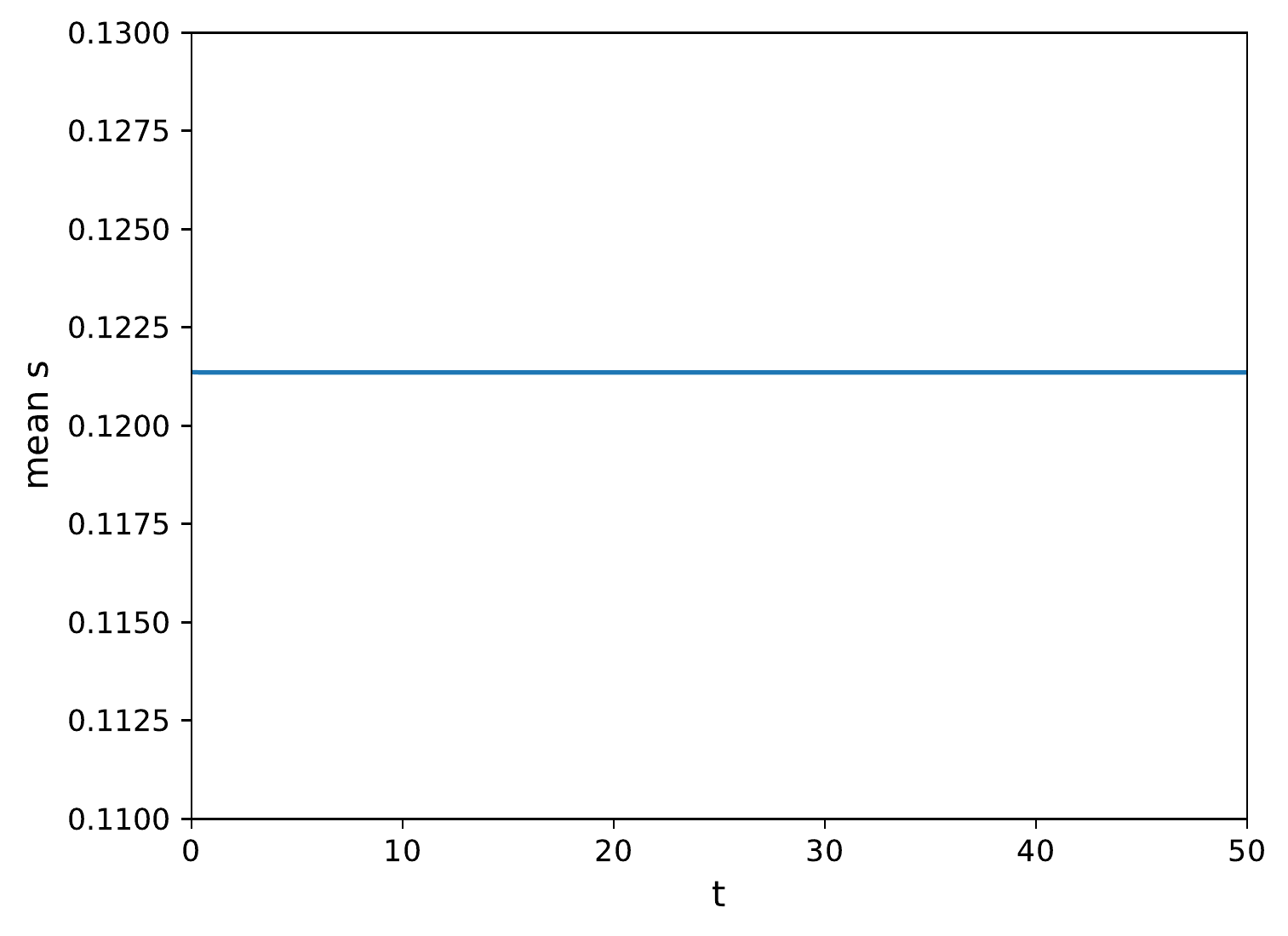}}
\caption{The preservation of mean $c$ and $s$ in the spinodal-drop problem. The curves for mean $c$ overlap with and without surfactant}
\label{fig:spinodal_mean}
\end{center}
\end{figure}

\begin{figure}[htbp]
\begin{center}
\subfigure[Free energy]{\includegraphics[width=0.32\textwidth]{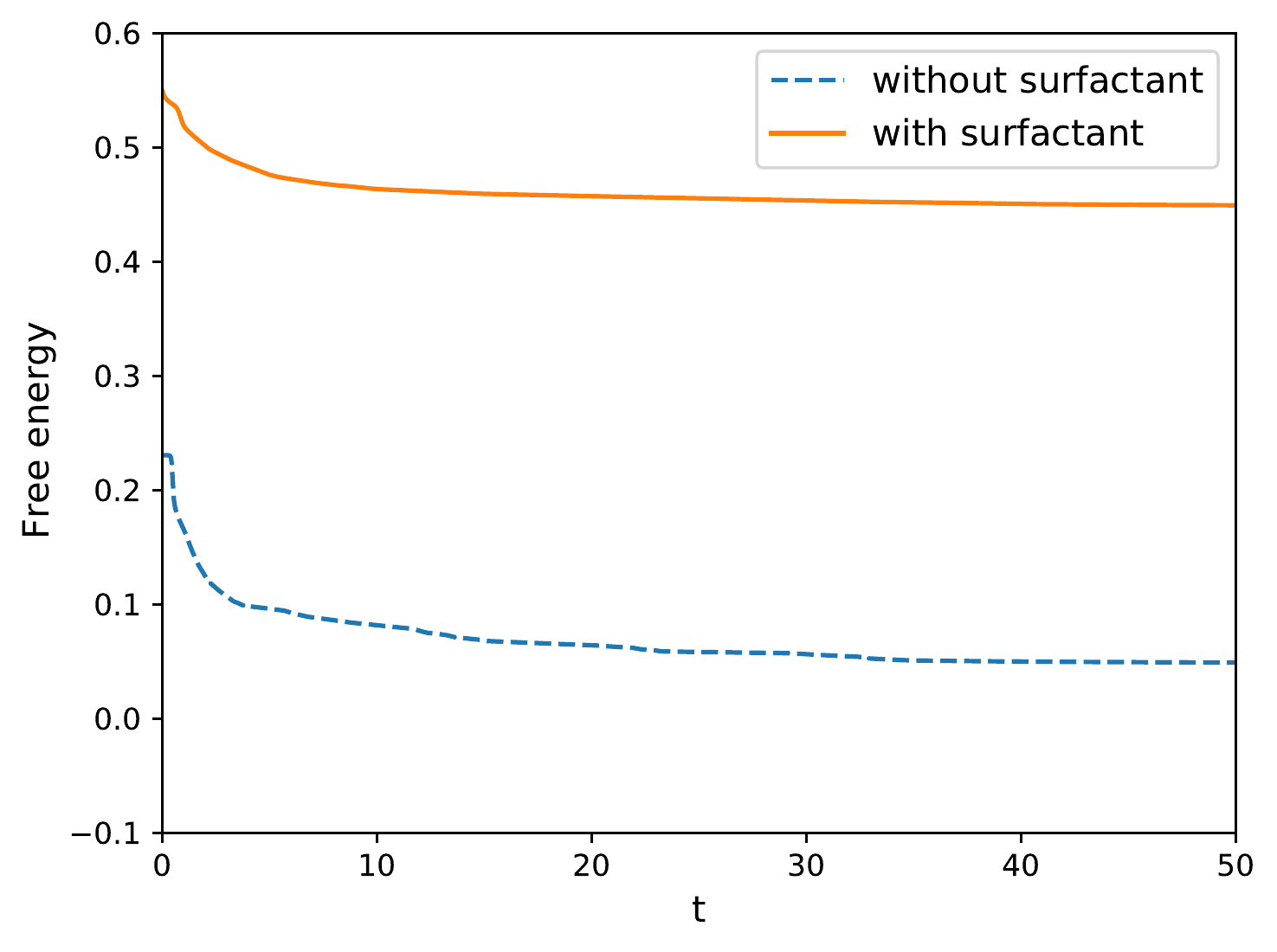}}
\subfigure[Free energy components in the presence of a surfactant]{\includegraphics[width=0.32\textwidth]{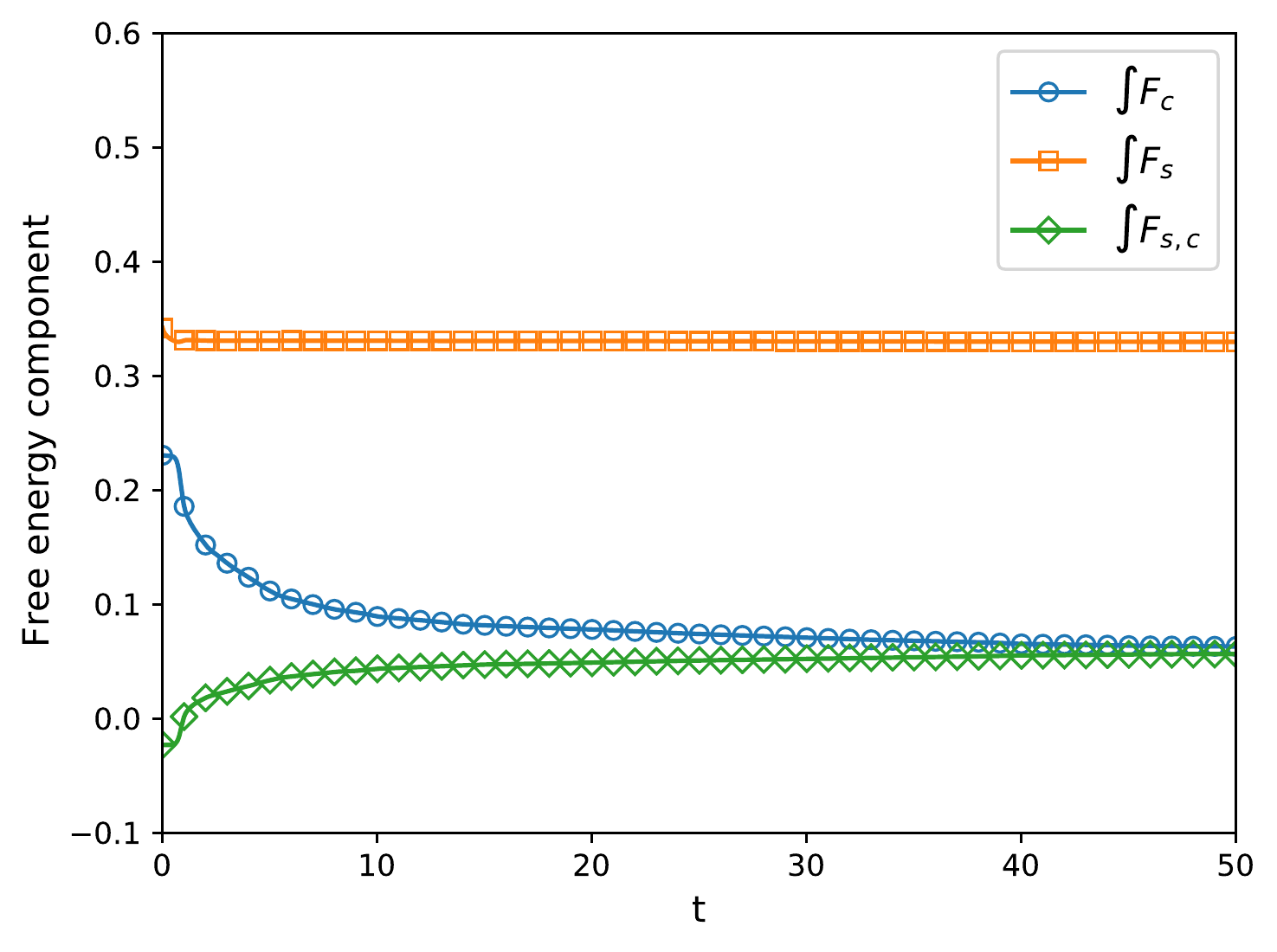}}
\caption{Evolution of Helmholtz free energy \eqref{eqn:helmenergy} for the spinodal-drop problem. (a) total free energy, (b) components of free energy}
\label{fig:spinodal_energy}
\end{center}
\end{figure}

\subsection{Flow through a cylinder}
We now consider the system dynamics in the presence of an underlying velocity field. The domain is the cylinder 
\[
\{(x,y,z) : \ \sqrt{(x-0.5)^2 + (y-0.5)^2} <0.25, \ z \in (0,1) \},
\]
which is discretized using cubic elements with edge length equal to $0.01$. Inflow and outflow boundary conditions are imposed at $z=0$ and $z=1$ respectively. The initial profile of the order parameter is given by $c^0(x,y,z) = \tanh((0.2-z)/\sqrt{2} \Cn))$ and the entire domain is initially filled with surfactant of concentration $s^0 = 0.01$.
The various parameters are set as  $\Pe_c = 100$, $\Pe_s = 100$ and $\alpha_2 = 1$. To study the effects of the interfacial adsorption (controlled by $\alpha_3$) and free surfactant penalization (controlled by $\alpha_4$), we choose $\alpha_3 \in \{0.5,1\}$ and $\alpha_4 \in \{0.5,1\}$. The simulation is run till time $t=0.7$. The velocity field is taken to be the steady state Poiseuille flow, which is depicted along a vertical cross-section through the cylinder axis in Figure \ref{fig:cylinder_v_c}(a). 

\begin{figure}[htbp]
\begin{center}
\subfigure[velocity field]{\includegraphics[width=0.22\textwidth]{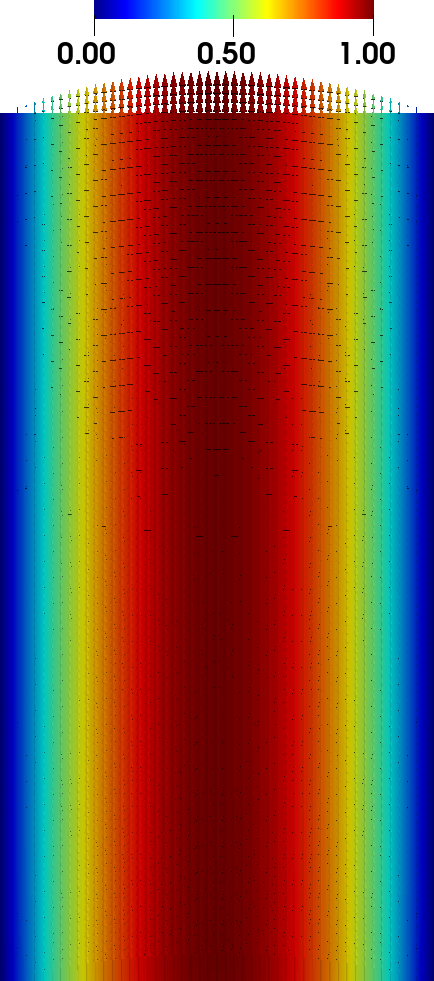}}
\subfigure[$c$ profile at final time]{\includegraphics[width=0.22\textwidth]{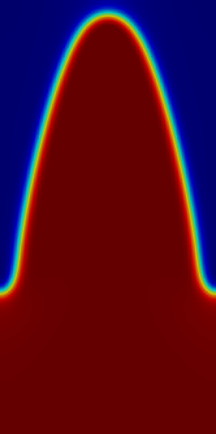}}
\caption{Steady state velocity field and order parameter profile at final time for flow through a cylinder. The plots are shown along a vertical cross-section through the cylinder axis}
\label{fig:cylinder_v_c}
\end{center}
\end{figure}

The evolution of the order parameter is indistinguishable for the various parameter combinations considered in this experiment, with the final profile shown in Figure \ref{fig:cylinder_v_c}(b). Note that the initial planar interface develops into a protruded interface due to the underlying velocity field. The extent of the protrusion can vary depending on the choice of $\Pe_c$, as has been observed in \cite{LIU20}. The final surfactant profiles for various combinations of $\alpha_3$ and $\alpha_4$ are shown in Figure \ref{fig:cylinder_s}. We observe that increasing $\alpha_4$ forces a larger amount of surfactant to move to the interface, as compared to increasing $\alpha_3$. This can be seen more clearly in Figure \ref{fig:cylinder_s_line} where we plot of the surfactant concentration along the cylinder axis at final time. 

\begin{figure}[htbp]
\begin{center}
\subfigure[$\alpha_3 = 0.5$, $\alpha_4 = 0.5$]{\includegraphics[width=0.22\textwidth]{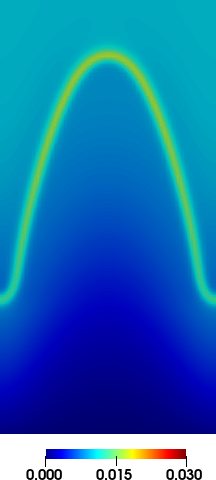}}
\subfigure[$\alpha_3 = 0.5$, $\alpha_4 = 1$]{\includegraphics[width=0.22\textwidth]{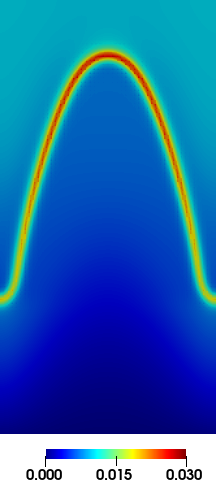}}
\subfigure[$\alpha_3 = 1$, $\alpha_4 = 0.5$]{\includegraphics[width=0.22\textwidth]{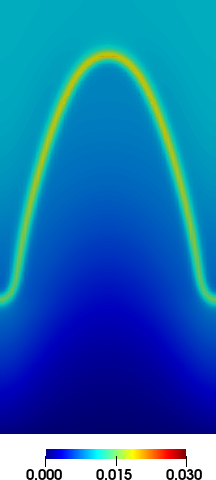}}
\subfigure[$\alpha_3 = 1$, $\alpha_4 = 1$]{\includegraphics[width=0.22\textwidth]{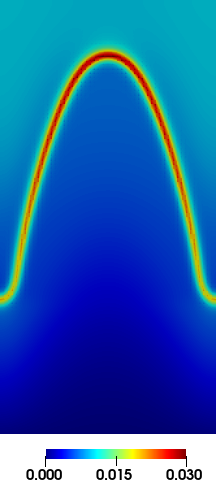}}
\caption{Surfactant profile at final time for flow through a cylinder for several values of $\alpha_3, \alpha_4$. The plots are shown along a vertical cross-section through the cylinder axis}
\label{fig:cylinder_s}
\end{center}
\end{figure}

\begin{figure}[htbp]
\begin{center}
\includegraphics[width=0.45\textwidth]{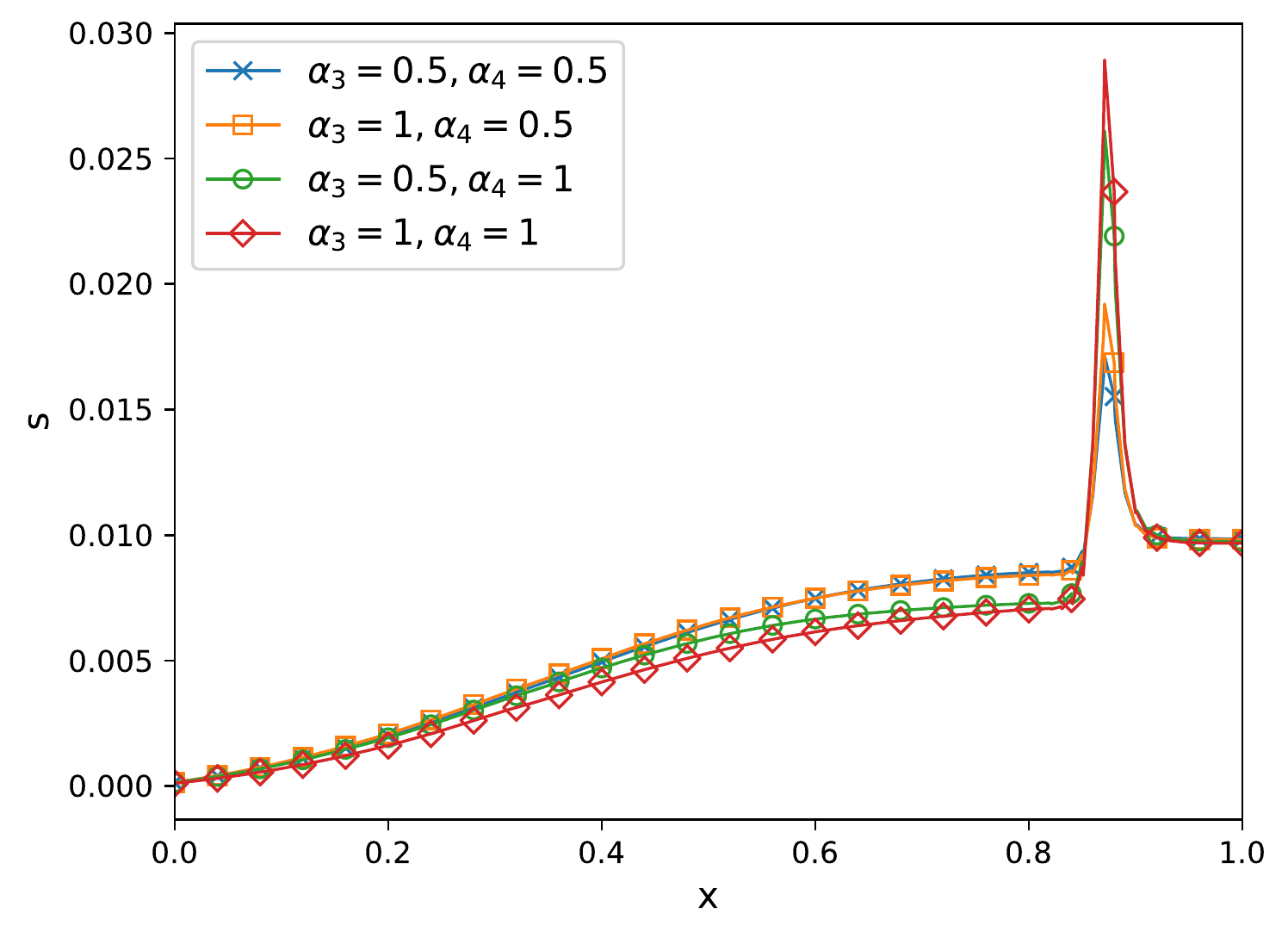}
\caption{Surfactant concentration at final time along the axis of the cylinder, i.e., $x=0.5,\ y=0.5,\ z \in [0,1]$}
\label{fig:cylinder_s_line}
\end{center}
\end{figure}

\subsection{Flow of a droplet through a sinusoidal pipe}
This experiment is designed to demonstrate the benefit of introducing a surfactant into a two-phase flow. We consider a sinusoidal pore space described by
\begin{equation*}
\{(x,y,z) : \sqrt{(y-0.5)^2 + (z-0.5)^2} < r(x), \ x \in (0,1)\},
\end{equation*}
where
\begin{equation*}
r(x) = \begin{cases}
\frac{(r_\text{pipe} - r_\text{throat})}{2}\cos\left(\frac{8(5x-1) \pi}{3} \right) + \frac{r_\text{pipe} +r_\text{throat}}{2} \\
\qquad \text{if} \ x \in (0.2,0.8)\\
r_{pipe}  \qquad  \text{if} \ x_0 \in (0,0.2] \cup [0.8,1).
\end{cases}
\end{equation*}
Here the radius of the pipe is $r_\text{pipe} = 0.1$ and the radius of each throat is $r_\text{throat} = 0.015$. The shape of the pipe is shown in Figure \ref{fig:sinepipe}, where the domain is discretized using cubic cells with edge length equal to $5 \times 10^{-3}$. The velocity field is obtained by solving the incompressible Navier-Stokes equation to steady state in this domain, by considering the inflow boundary condition 
\[
\vel_{in} = 0.02 \left(1- \left(\frac{y-0.5}{r_\text{pipe}} \right)^2 - \left(\frac{z-0.5}{r_\text{pipe}} \right)^2 \right)
\]
at $x=0$ and setting open/Neumann boundary conditions at the outlet $x=1$. The magnitude of the steady-state velocity is shown in Figure \ref{fig:sinepipe_v_c_init}(a). Note that the velocity magnitude is the largest at the throats of the pipe. The initial profile for $c$ is given by
\begin{align*}
c^0(x,y,z) &= -\tanh\left(\frac{0.04 - d}{\sqrt{2} \Cn}\right), \\
d &= \sqrt{(x-0.35)^2 + (y-0.5)^2 + (z-0.5)^2},
\end{align*}
and is shown in Figure \ref{fig:sinepipe_v_c_init}(b). This describes the scenario of a residual oil drop (blue phase) trapped inside a pore. The remaining parameters are chosen as $\Pe_c = 100$, $\Pe_s = 100$, $\alpha_2 = 1$, $\alpha_3 = 1$ and $\alpha_4 = 1$.

\begin{figure}[htbp]
\begin{center}
\includegraphics[width=0.42\textwidth]{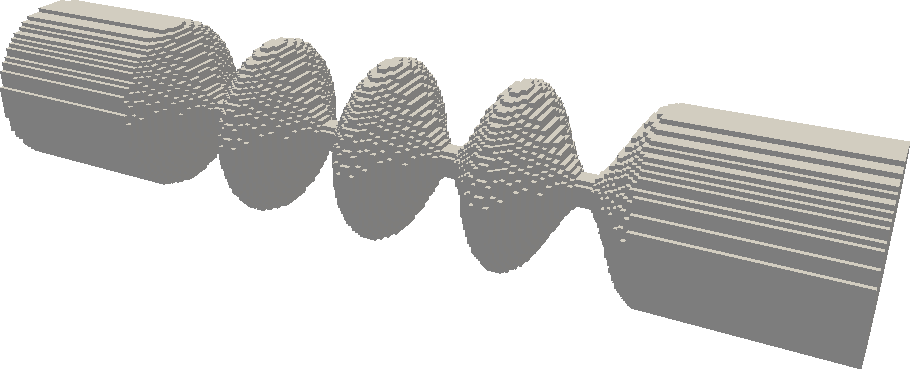}
\caption{Shape of sinusoidal pipe, where the flow domain is shown in gray}
\label{fig:sinepipe}
\end{center}
\end{figure}

In the absence of any surfactant, the oil drop passes through the throat into the second cavity, as shown in Figure \ref{fig:sinepipe_s0_c}. However, it is unable to detach itself from the walls of the pipe and gets stuck at time $t \approx1.3$ (also see Figure \ref{fig:sinepipe_vol}(b)). We restart the simulation and introduce a constant initial surfactant throughout the entire pipe, $s^0 = 0.01$. As expected, the surfactant moves from the bulk phase and adsorbs to the interface of the drop (see Figure \ref{fig:sinepipe_s0p01_s}). However,  the drop once again gets stuck to the wall at $t \approx 1.3$, as can be seen in Figure \ref{fig:sinepipe_s0p01_c}. We repeat the experiment again, but this time taking a larger amount for the initial surfactant,  $s^0 = 0.05$. For this case, the amount of surfactant adsorbed on the interface seems to be sufficient to push the drop of oil in the next cavity of the pipe, as shown in Figure \ref{fig:sinepipe_s0p05_c} (also see Figure \ref{fig:sinepipe_vol}(c)). Thus, one can hope to push out trapped oil from cavities by introducing a sufficient amount of surfactant. This also motivates the use of surfactants in enhanced oil recovery from oil reservoirs. 

In addition, we note that the radius of the oil drop is considerably reduced when it is successfully pushed into neighbouring cavity (see \ref{fig:sinepipe_s0p05_c}(e)-(f)). Eventually the drop completely diffuses into the domain. As shown in Figure \ref{fig:sinepipe_s0p05_s}, the surfactant collapses into a drop once the diffusive interface of $c$ disappears. This spontaneous shrinkage of a drop is known to occur with the Cahn-Hilliard system, when the radius of the drop is smaller than a critical radius \cite{YUE07}, which for the current problem is given by
\[
r_c = \left( \frac{2^{1/6}}{3 \pi} V \Cn \right)^{1/4} \approx 0.0923,
\] 
where $V \approx 0.1218$ is the volume of each pore, while $\Cn = 5 \times 10^{-3}$. Note that the radius of the initial drop is $0.04$, which is much smaller than the critical radius.

\begin{figure}[htbp]
\begin{center}
\subfigure[velocity magnitude]{\includegraphics[width=0.48\textwidth]{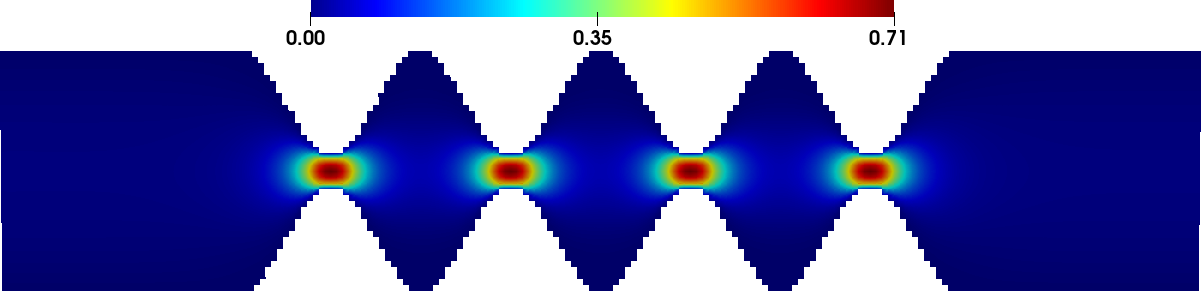}}
\subfigure[$c$ profile at initial time]{\includegraphics[width=0.48\textwidth]{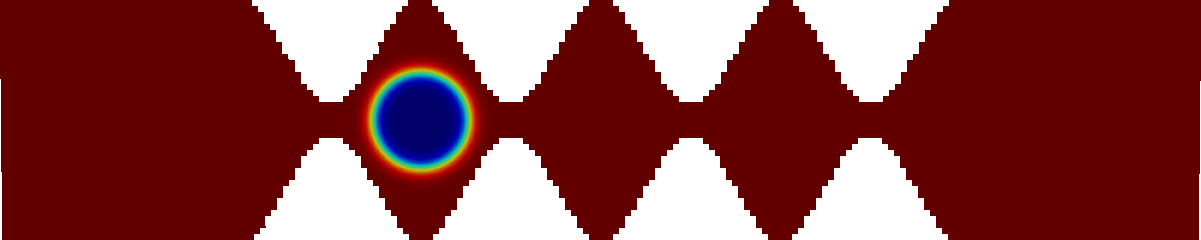}}
\caption{Steady state velocity field and the initial order parameter profile for flow through a sinusoidal pipe. The plots are shown along a vertical cross-section through the axis of the pipe}
\label{fig:sinepipe_v_c_init}
\end{center}
\end{figure}

\begin{figure}[htbp]
\begin{center}
\subfigure[$t=0.2$]{\includegraphics[width=0.48\textwidth]{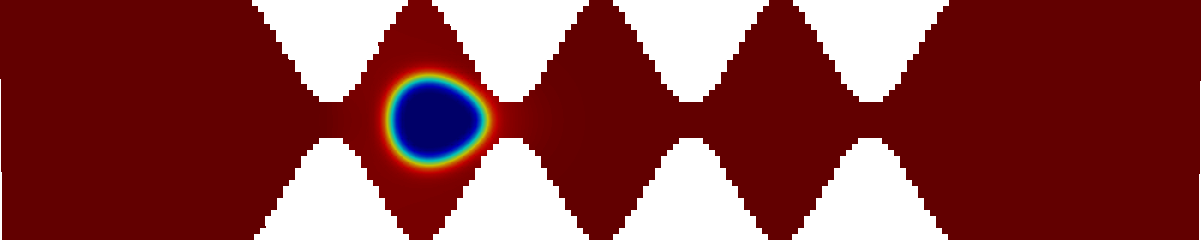}}
\subfigure[$t=0.5$]{\includegraphics[width=0.48\textwidth]{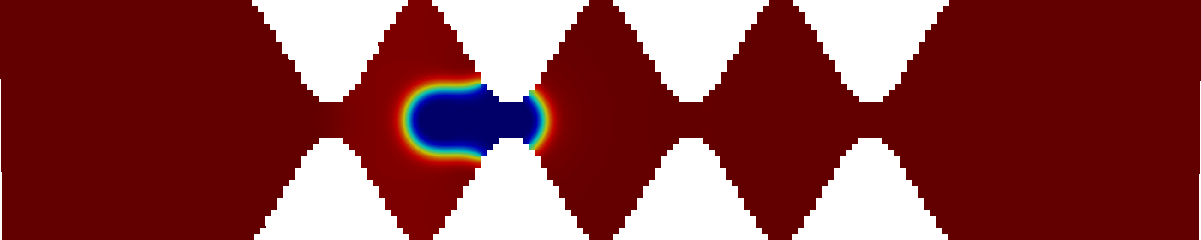}}
\subfigure[$t=1.0$]{\includegraphics[width=0.48\textwidth]{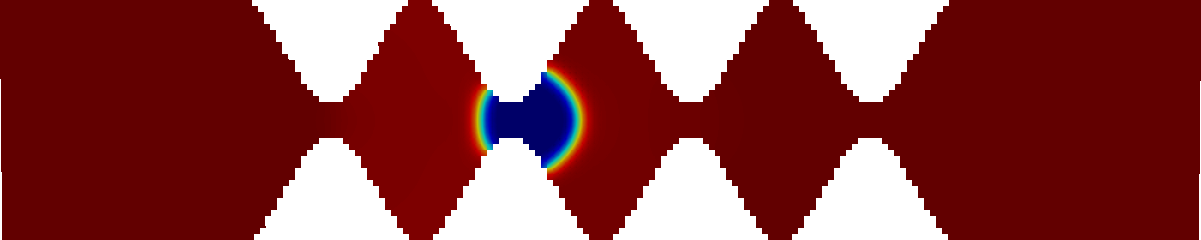}}
\subfigure[$t=1.3$]{\includegraphics[width=0.48\textwidth]{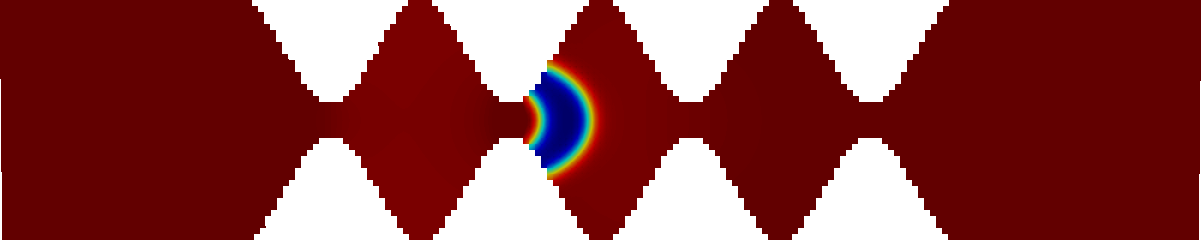}}
\caption{Snapshots of flow of a trapped oil drop in the absence of any surfactant}
\label{fig:sinepipe_s0_c}
\end{center}
\end{figure}

\begin{figure}[htbp]
\begin{center}
\subfigure[$t=0.2$]{\includegraphics[width=0.48\textwidth]{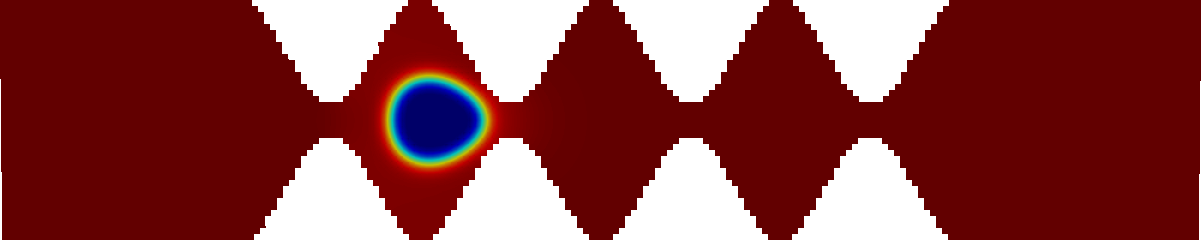}}
\subfigure[$t=0.5$]{\includegraphics[width=0.48\textwidth]{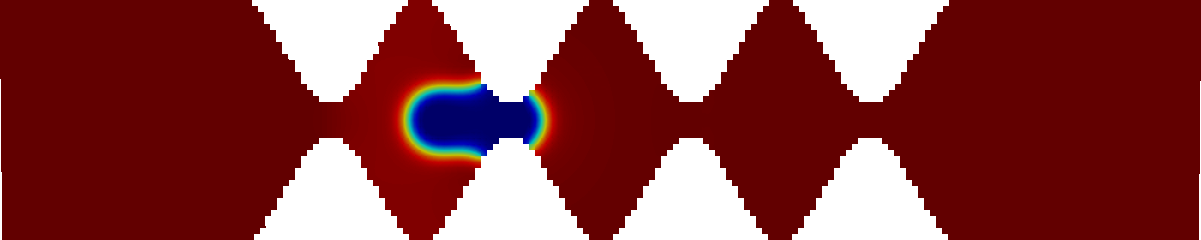}}
\subfigure[$t=1.0$]{\includegraphics[width=0.48\textwidth]{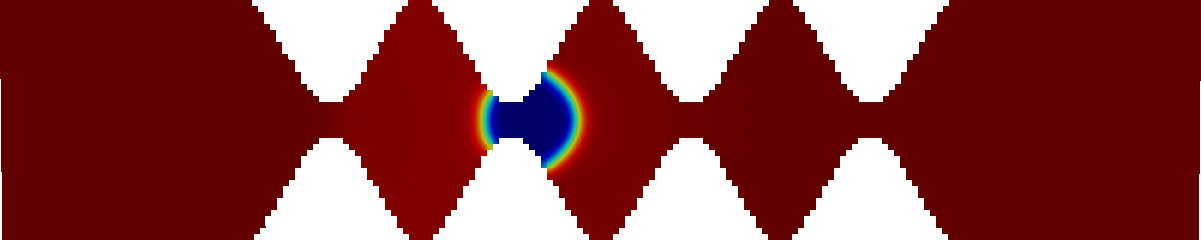}}
\subfigure[$t=1.3$]{\includegraphics[width=0.48\textwidth]{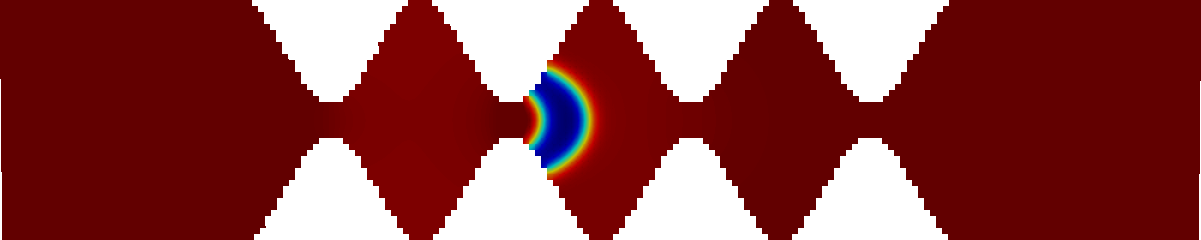}}
\caption{Snapshots of flow of a trapped oil drop with an initial constant surfactant $s^0 = 0.01$}
\label{fig:sinepipe_s0p01_c}
\end{center}
\end{figure}

\begin{figure}[htbp]
\begin{center}
\subfigure[$t=0.2$]{\includegraphics[width=0.48\textwidth]{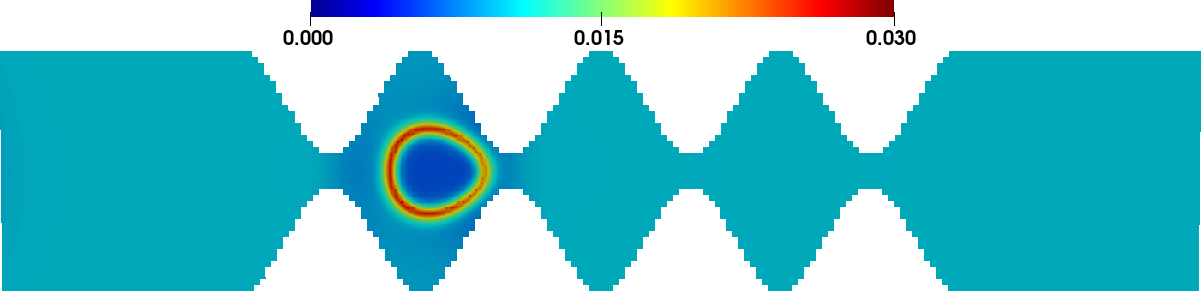}}
\subfigure[$t=0.5$]{\includegraphics[width=0.48\textwidth]{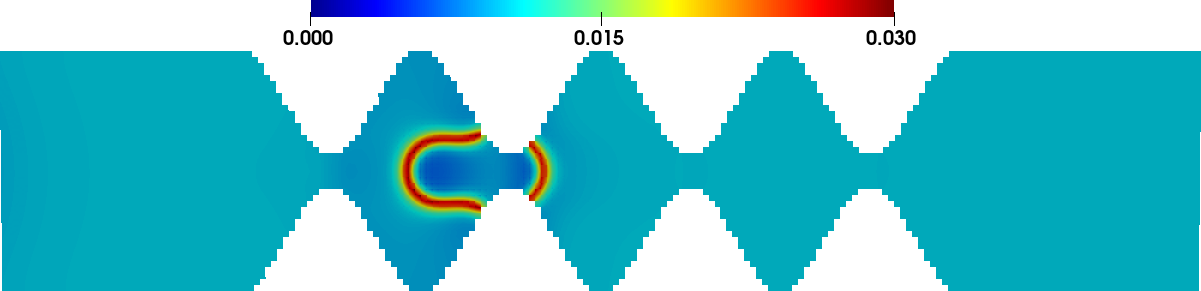}}
\subfigure[$t=1.0$]{\includegraphics[width=0.48\textwidth]{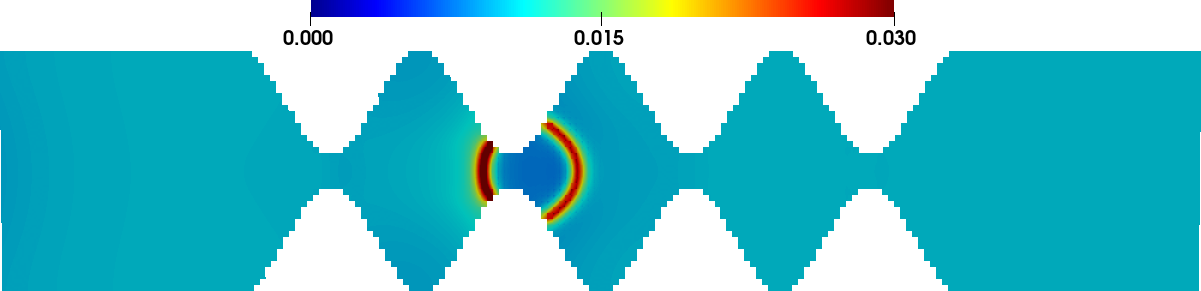}}
\subfigure[$t=1.3$]{\includegraphics[width=0.48\textwidth]{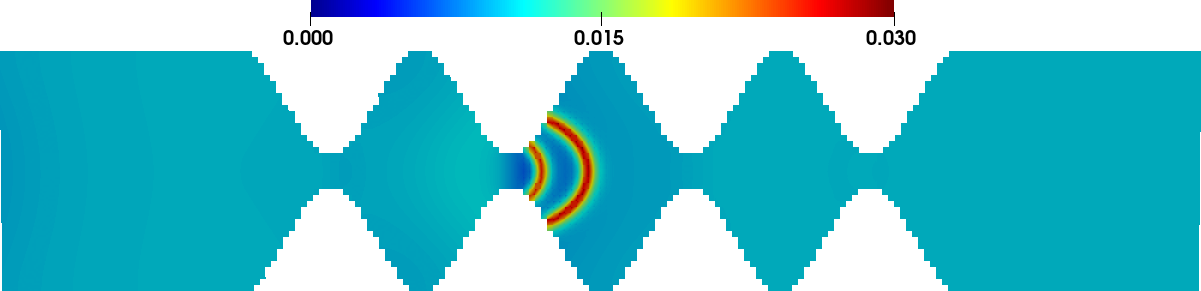}}
\caption{Surfactant dynamics in a flow of a trapped oil drop with an initial constant surfactant $s^0 = 0.01$}
\label{fig:sinepipe_s0p01_s}
\end{center}
\end{figure}

\begin{figure}[htbp]
\begin{center}
\subfigure[$t=0.2$]{\includegraphics[width=0.48\textwidth]{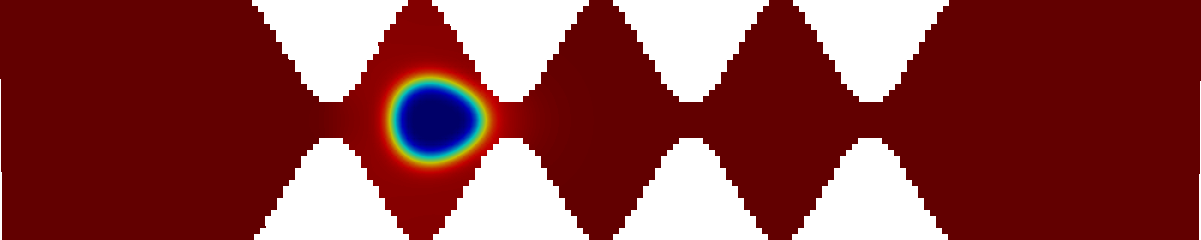}}
\subfigure[$t=0.5$]{\includegraphics[width=0.48\textwidth]{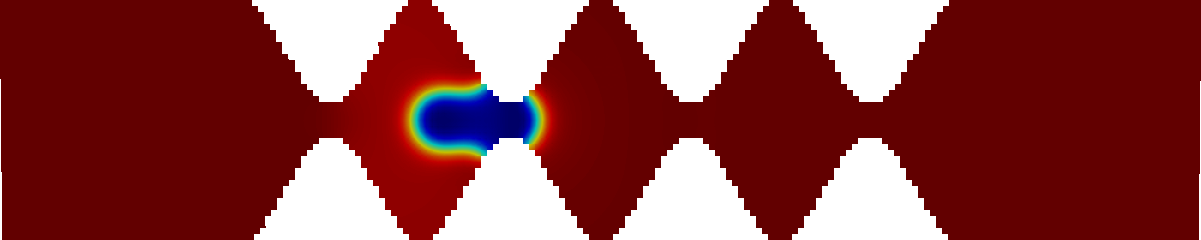}}
\subfigure[$t=1.0$]{\includegraphics[width=0.48\textwidth]{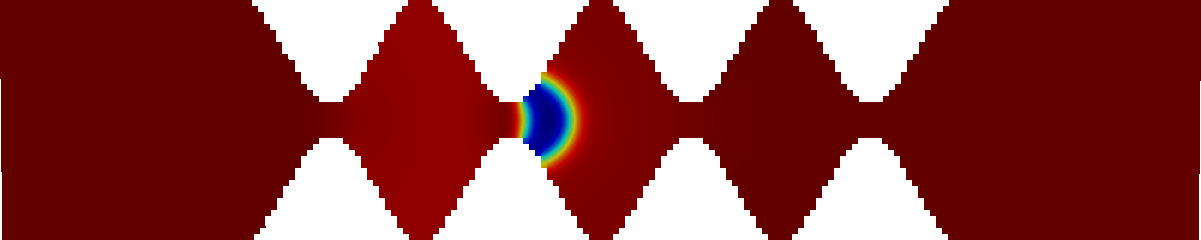}}
\subfigure[$t=1.3$]{\includegraphics[width=0.48\textwidth]{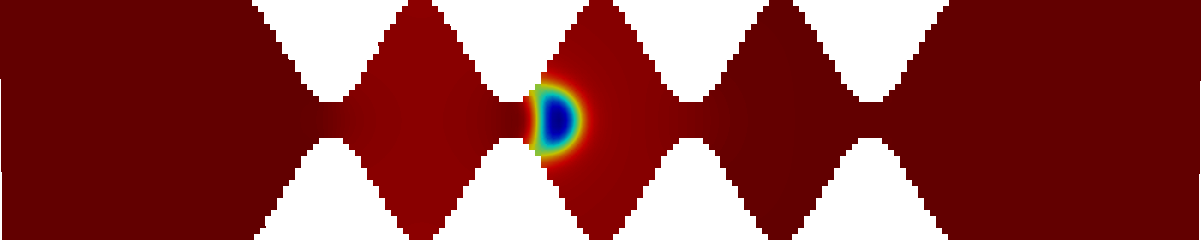}}
\subfigure[$t=1.4$]{\includegraphics[width=0.48\textwidth]{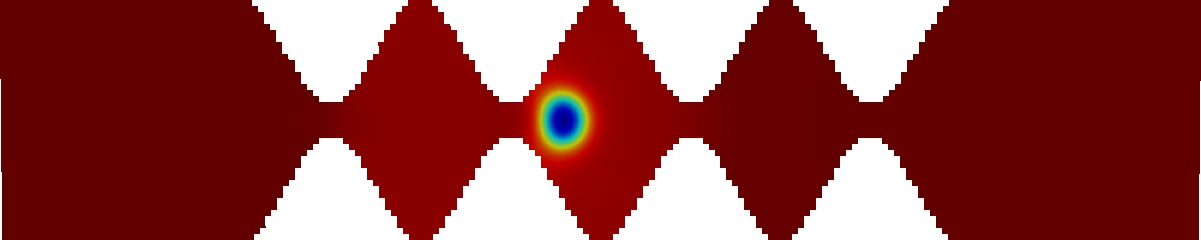}}
\subfigure[$t=1.5$]{\includegraphics[width=0.48\textwidth]{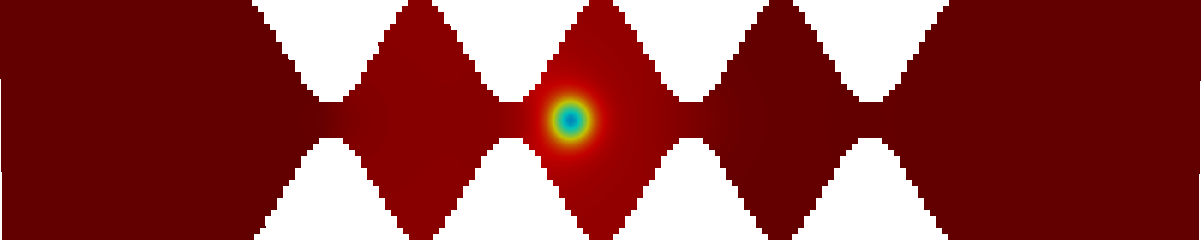}}
\caption{Snapshots of flow of a trapped oil drop with an initial constant surfactant $s^0 = 0.05$}
\label{fig:sinepipe_s0p05_c}
\end{center}
\end{figure}

\begin{figure}[htbp]
\begin{center}
\subfigure[$t=0.2$]{\includegraphics[width=0.48\textwidth]{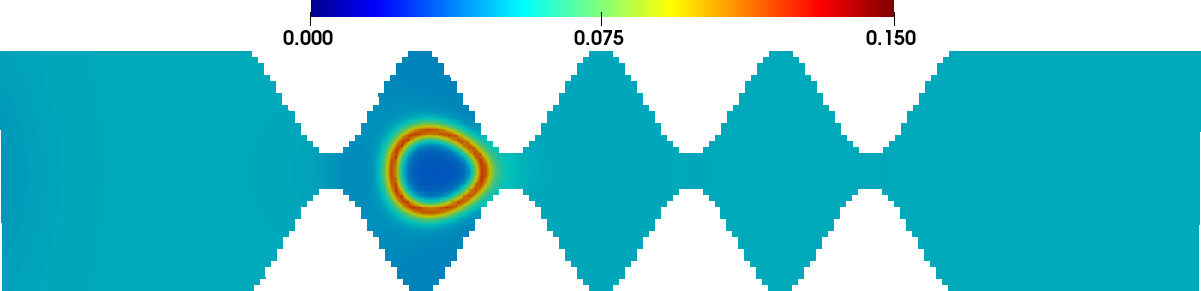}}
\subfigure[$t=0.5$]{\includegraphics[width=0.48\textwidth]{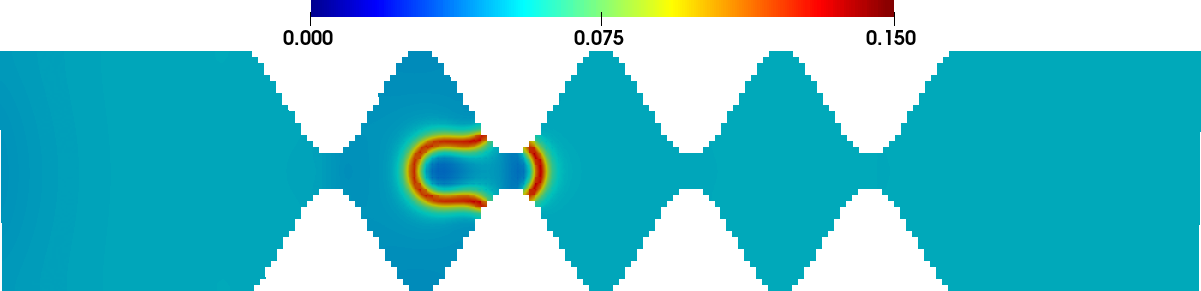}}
\subfigure[$t=1.0$]{\includegraphics[width=0.48\textwidth]{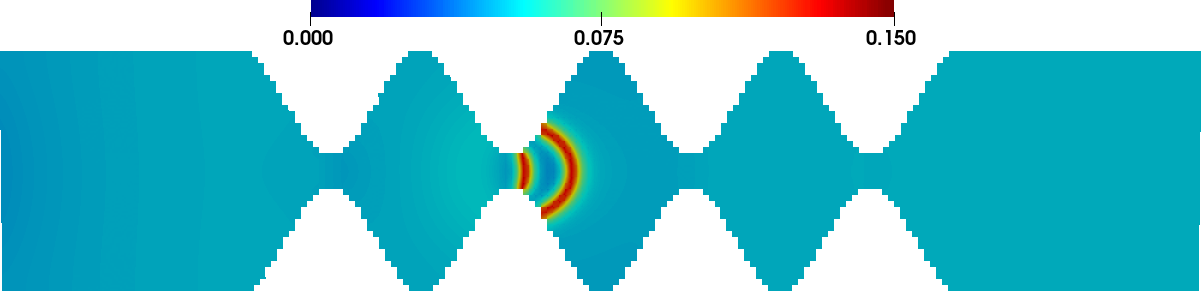}}
\subfigure[$t=1.3$]{\includegraphics[width=0.48\textwidth]{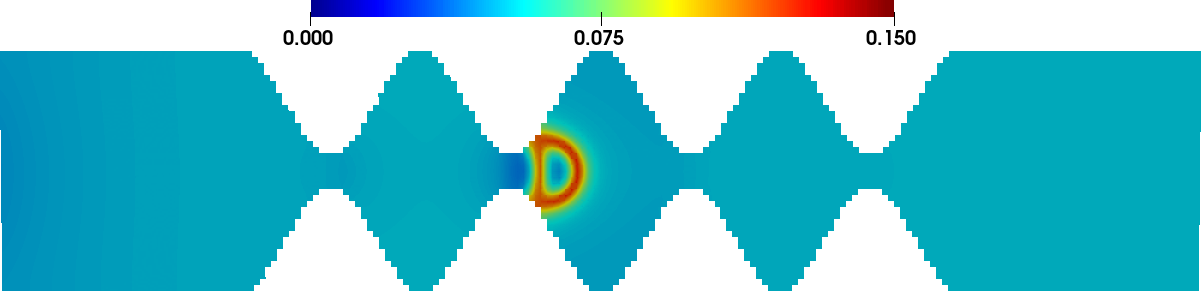}}
\subfigure[$t=1.4$]{\includegraphics[width=0.48\textwidth]{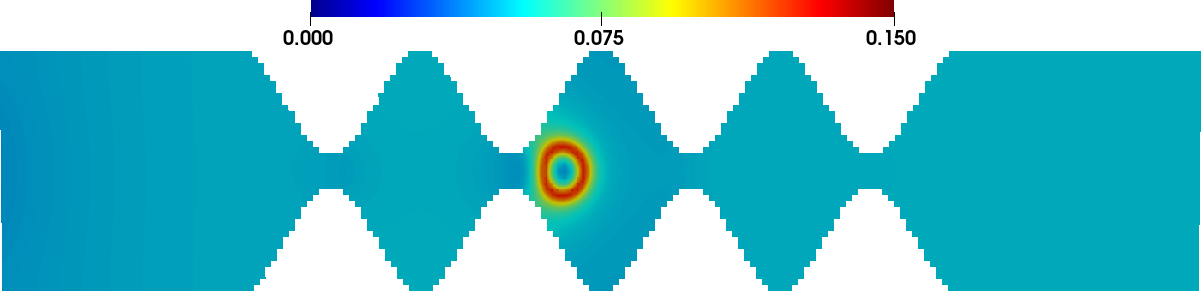}}
\subfigure[$t=1.5$]{\includegraphics[width=0.48\textwidth]{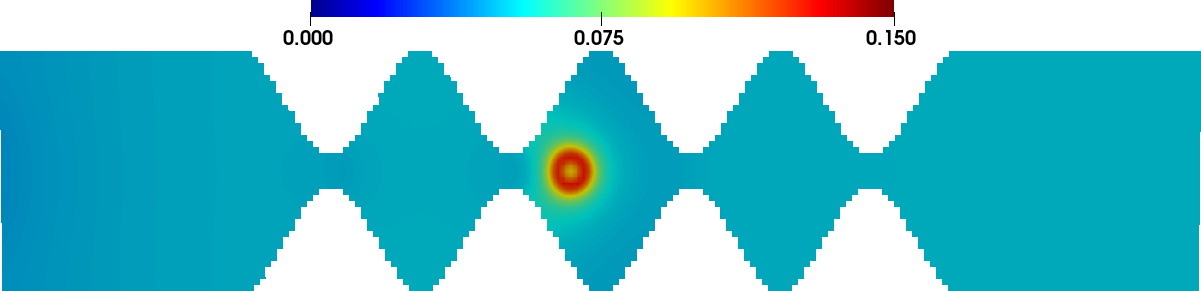}}
\caption{Surfactant dynamics in a flow of a trapped oil drop with an initial constant surfactant $s^0 = 0.05$}
\label{fig:sinepipe_s0p05_s}
\end{center}
\end{figure}

\begin{figure}[htbp]
\begin{center}
\subfigure[$t=0.3$, without surfactant]{\includegraphics[width=0.32\textwidth]{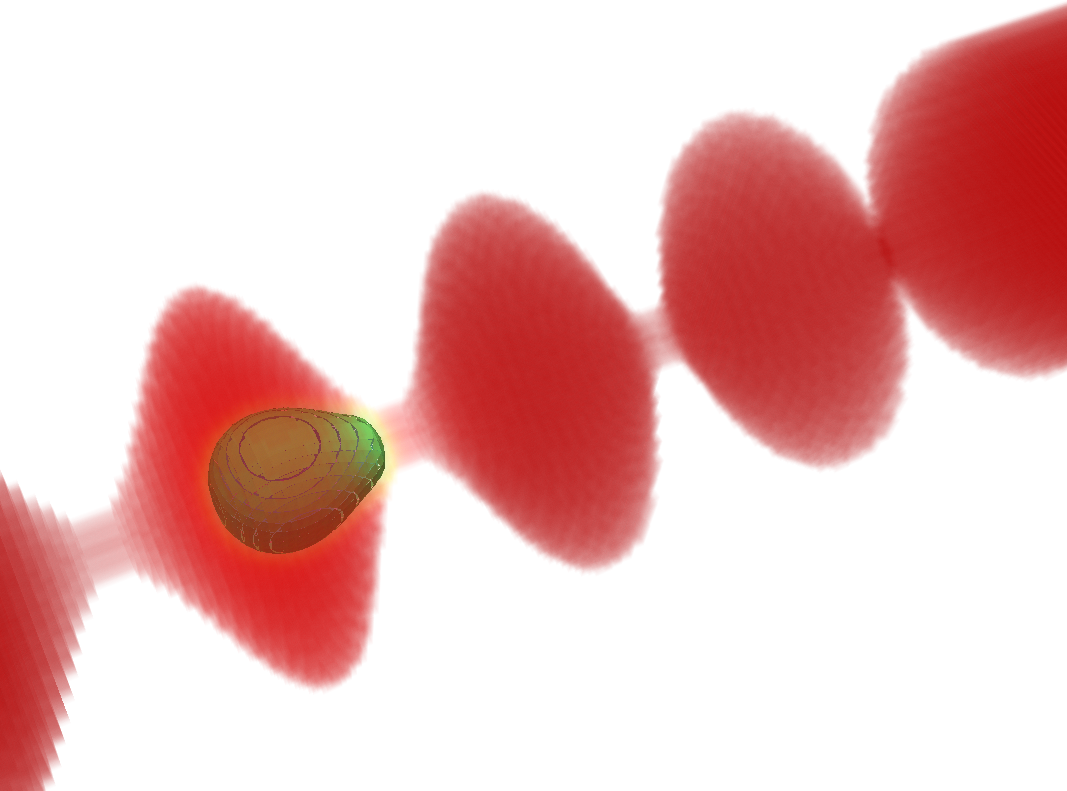}}
\subfigure[$t=1.3$, without surfactant]{\includegraphics[width=0.32\textwidth]{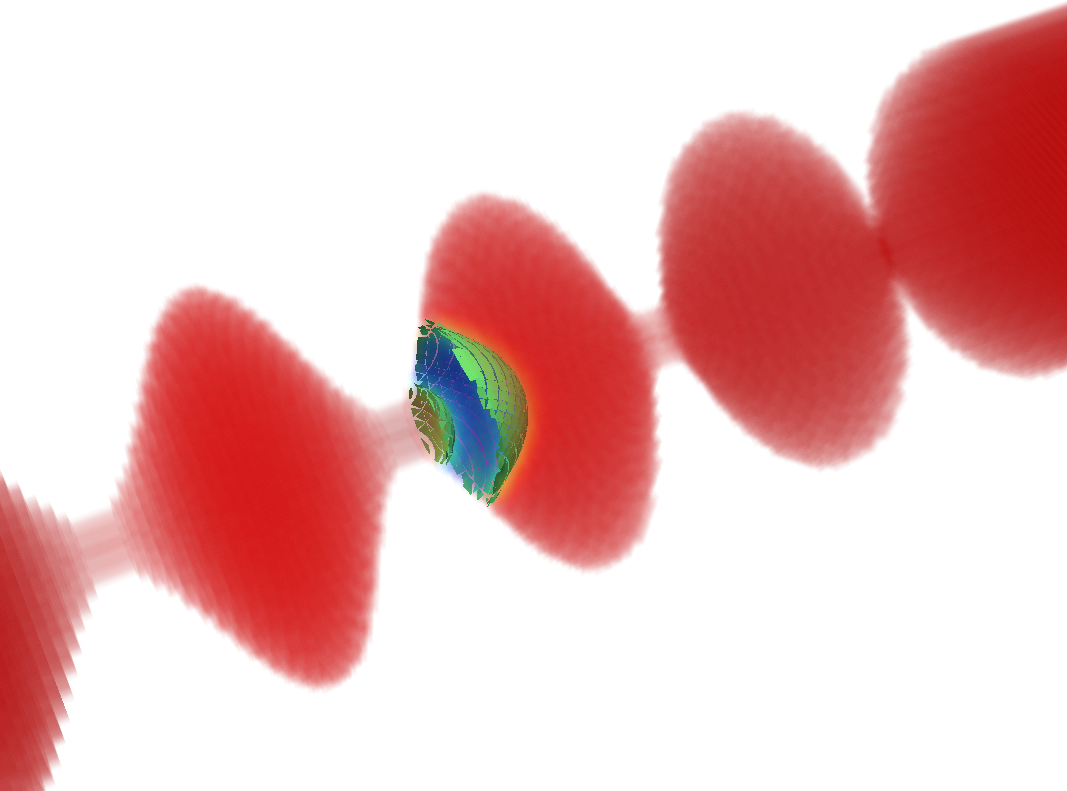}}
\subfigure[$t=1.3$, with surfactant ($s^0=0.05$)]{\includegraphics[width=0.32\textwidth]{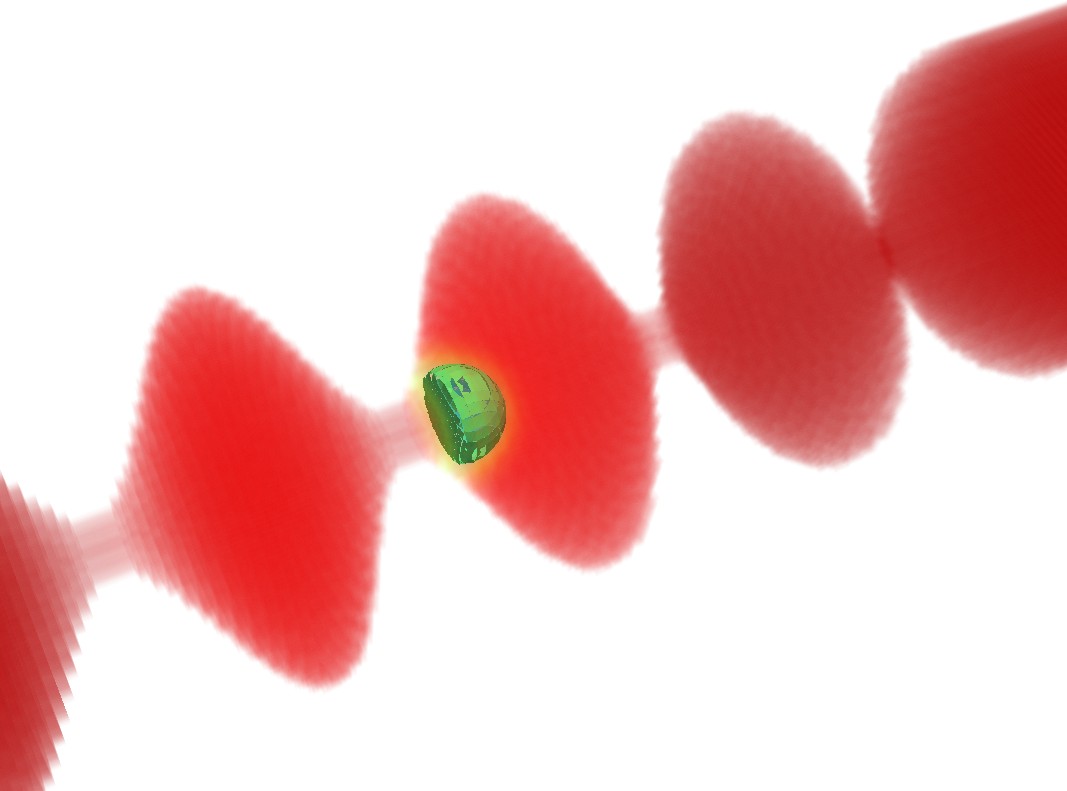}}
\caption{Three-dimensional depiction of flow of a trapped oil drop through a sinusoidal pipe. The red phase represents water, the blue phase represents oil and the green surface denotes the diffuse interface i.e., $c=0$}
\label{fig:sinepipe_vol}
\end{center}
\end{figure}

\subsection{Flow through Berea sandstone}
Finally, we simulate a realistic flow through a porous media, where the domain is generated by micro-CT scans of a Berea sandstone sample \cite{ANDRA2013}. In Figure \ref{fig:berea_full}, we show the rock sample embedded in the domain $(0,1)^3$, which is discretized with a mesh size $h=1/160$. The domain inflow is set at $x=0$, the outflow at $x=1$, while all remaining boundaries are set as solid walls. In order to induce a stable flow field in the pore space, we have attached buffers at the inflow and outflow faces, each having a width of 16 cells. The underlying velocity field is obtained by solving the incompressible Navier-Stokes to time $t=1$, which is shown in Figure~\ref{fig:berea_soln}(a).

The surfactant-order parameter system is solved with parameters $\Pe_c = 100$, $\Pe_s = 100$, $\alpha_2 = 1$, $\alpha_3 = 1$ and $\alpha_4 = 1$. A uniform time-step of $\tau=5 \times 10^{-3}$ is used to march in time. The pore space is initially saturated with one of the phases ($c=-1$), while the second phase is injected through the inflow. We also consider the pore space to be initially saturated with a minimal surfactant concentration of $s^0=10^{-3}$, while a constant stream of surfactant with $s=0.2$ is injected into the domain along with the second phase. The profiles of $c$ and $s$ at time $t=1$ are shown in Figure \ref{fig:berea_soln}(b)-(c). We observe that the surfactant concentration is much higher at the diffusive interface ($c=0$) in any local neighbourhood of the domain. To visualize the dynamics in the interior of the domain, we consider the solution on 2D slices in the direction of the flow. The surfactant adsorbs to the interface on each of these slices, as can be seen in Figure \ref{fig:berea_slice}.

\begin{figure}[htbp]
\begin{center}
\includegraphics[width=0.42\textwidth]{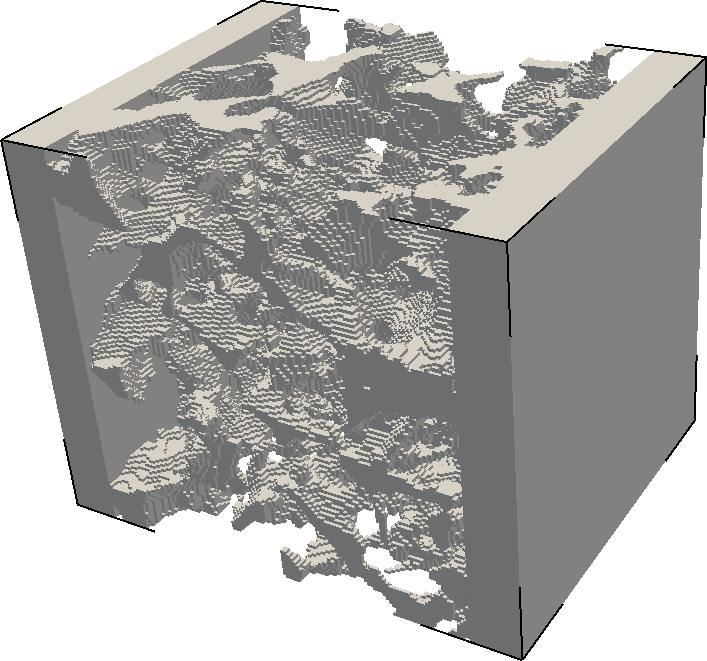}
\caption{Berea sandstone structure with the flow domain shown in gray}
\label{fig:berea_full}
\end{center}
\end{figure}

\begin{figure}[htbp]
\begin{center}
\subfigure[Velocity field]{\includegraphics[width=0.35\textwidth]{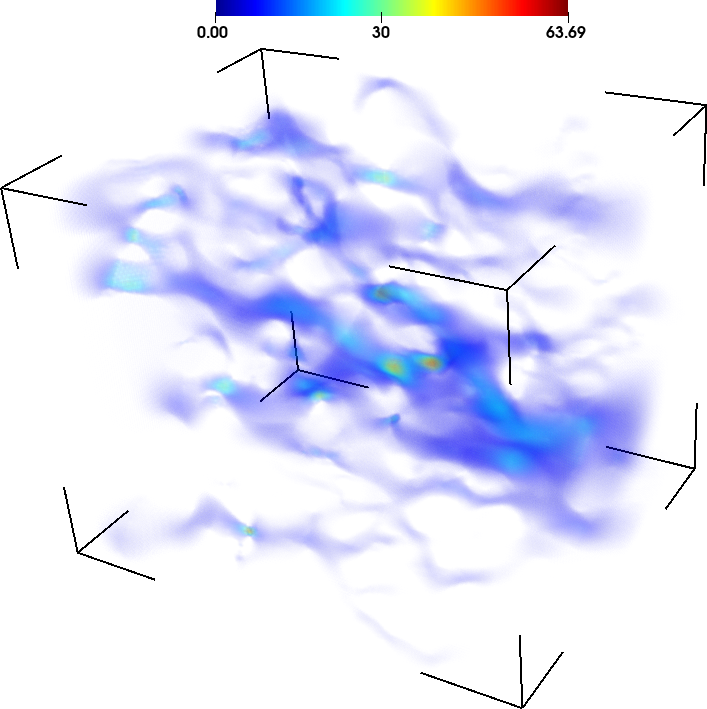}}
\subfigure[$c$]{\includegraphics[width=0.35\textwidth]{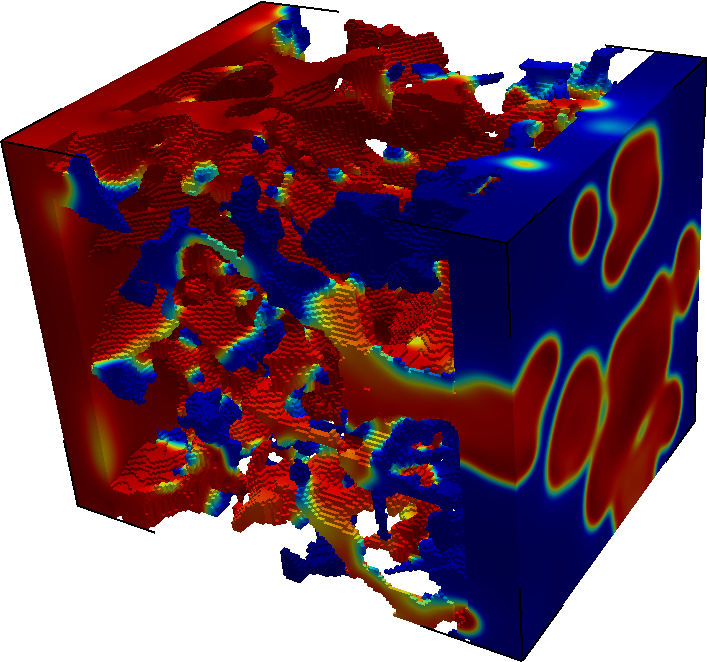}}
\subfigure[$s$]{\includegraphics[width=0.35\textwidth]{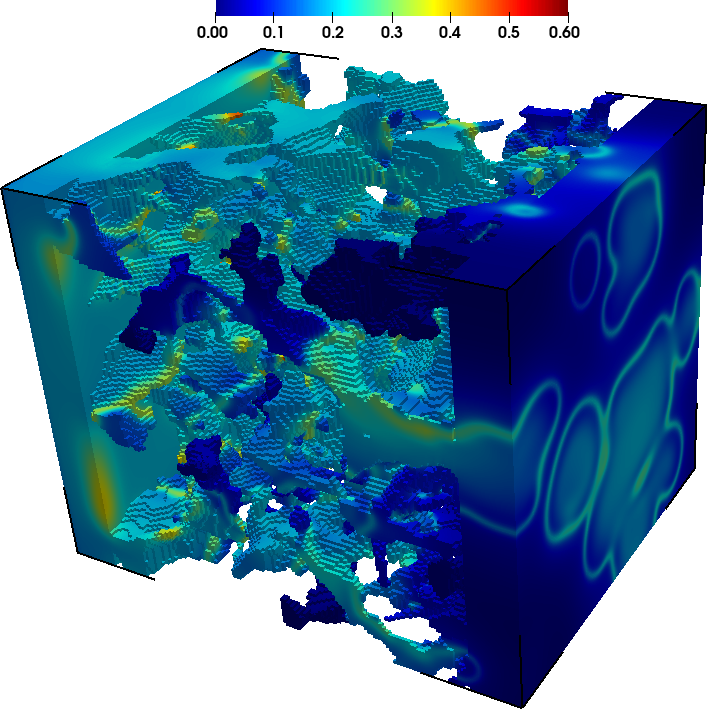}}
\caption{Solution of a three-component system through Berea sandstone at time $t=1$}
\label{fig:berea_soln}
\end{center}
\end{figure}

\begin{figure}[htbp]
\begin{center}
\subfigure[$c$ through $z=0.2$]{\includegraphics[width=0.35\textwidth]{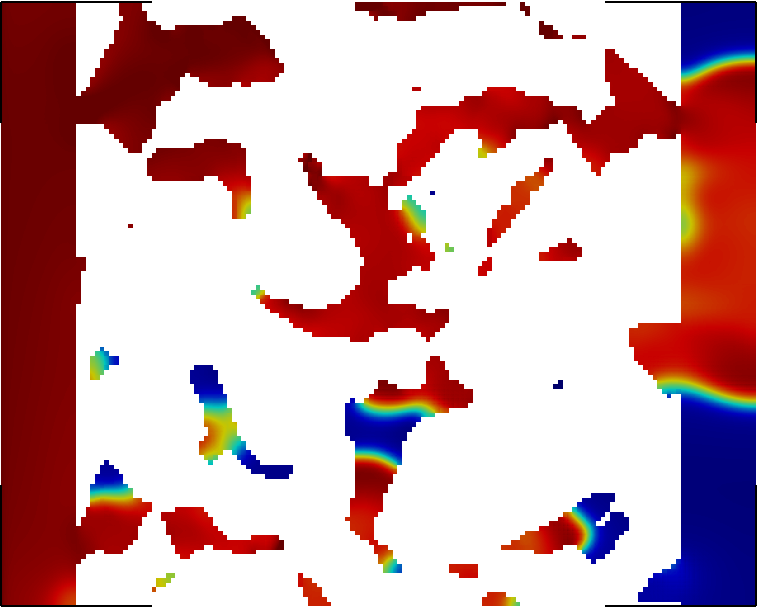}}
\subfigure[$s$ through $z=0.2$]{\includegraphics[width=0.35\textwidth]{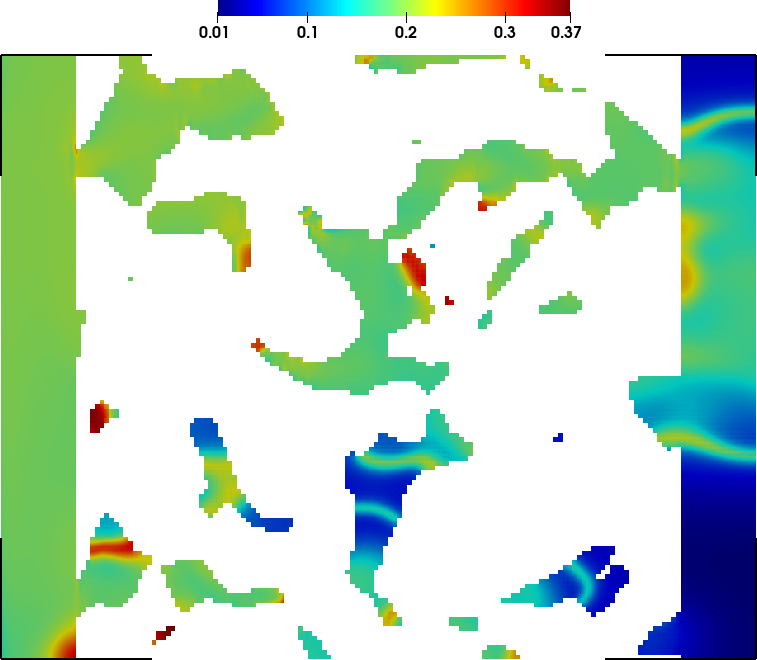}}\\
\subfigure[$c$ through $z=0.5$]{\includegraphics[width=0.35\textwidth]{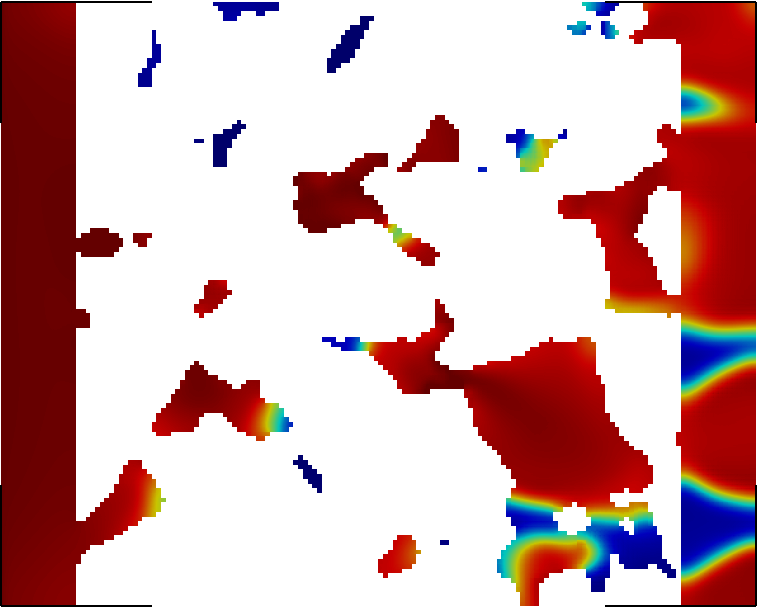}}
\subfigure[$s$ through $z=0.5$]{\includegraphics[width=0.35\textwidth]{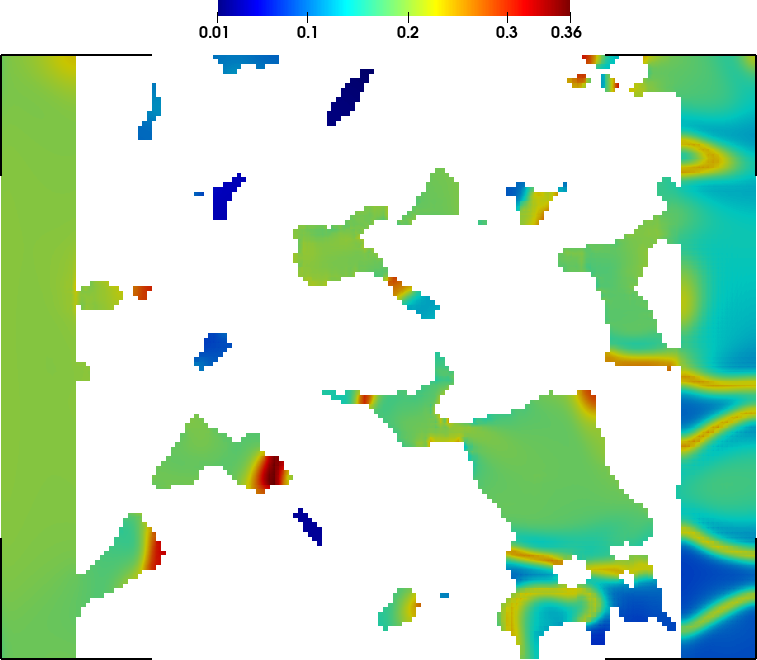}}\\
\subfigure[$c$ through $z=0.8$]{\includegraphics[width=0.35\textwidth]{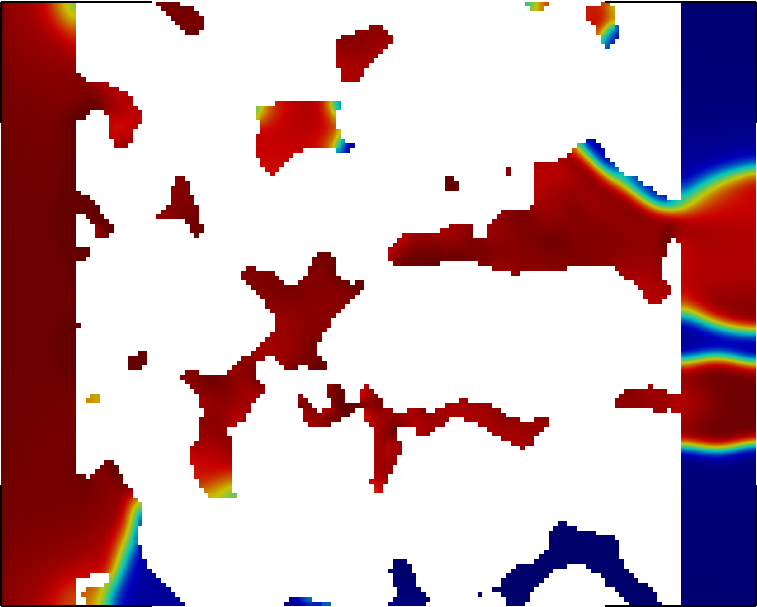}}
\subfigure[$s$ through $z=0.8$]{\includegraphics[width=0.35\textwidth]{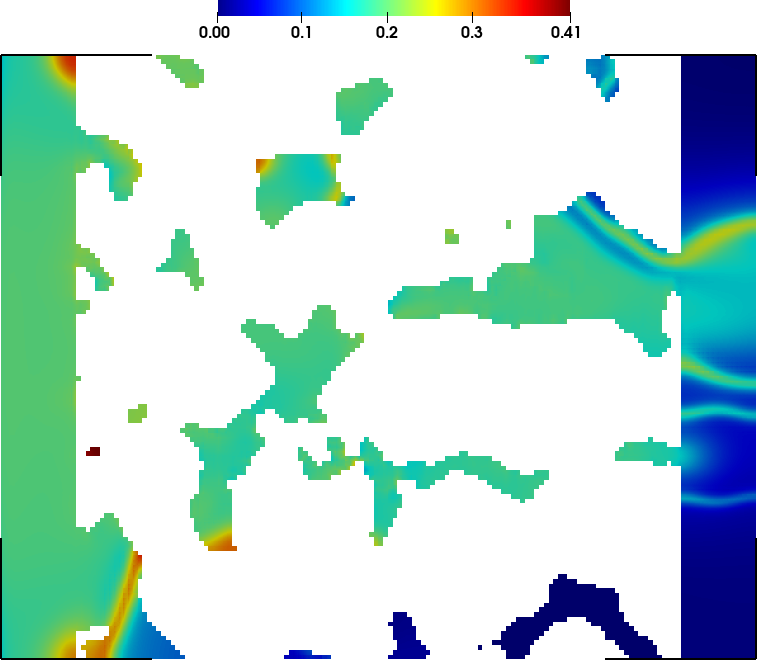}}
\caption{Order parameter and surfactant on 2D slices of the Berea sandstone at $t=1$}
\label{fig:berea_slice}
\end{center}
\end{figure}


%

\section{Conclusion}
This work formulates a numerical scheme for the discretization of a phase-field model of a system of two immiscible phases and a soluble surfactant.  The method is based on the discontinuous Galerkin method in space and a concave-convex splitting
in time.  Numerical results demonstrate that the scheme recovers the Langmuir adsorption isotherms, while exhibiting desirable physical properties, such as the decay of total Helmholtz free-energy. The simulation results show that the surfactant's concentration is (locally) higher at the interface between the two phases. The results also show the impact of the surfactant in facilitating the motion of trapped bubbles in pores. Finally, the proposed scheme is used to simulate flow through a Berea rock sample, thereby establishing its utility in effectively solving realistic problems.

This work demonstrates that an IPDG scheme can be used to solve the two-phase flow problem in the presence of a surfactant, which is known to be quite challenging. The DG formulation allows us to achieve arbitrary order of accuracy in space, even in complicated porous domains. While the time-discretization used in this paper is only first-order accurate, higher-order time marching strategies that ensure the decay of total free energy needs to be explored. Furthermore, the model considered in the work assumes that the underlying velocity field is not affected by the phase-surfactant dynamics. To capture more realistic dynamics would require a two-way coupling between the phase-surfactant model and the underlying incompressible flow equations is required. This will be investigated in future work, along with the construction of high-order time marching strategies that ensure the decay of total energy.
%

\section*{Acknowledgement}
The authors thank Dr. Steffen Berg for useful discussions on surfactant models.
Ray and Riviere acknowledge funding from a Shell-Rice collaboration. Riviere is also partially funded by NSF-DMS 1913291.

\bibliographystyle{abbrv}
\bibliography{ref}

\end{document}